\newtheorem{assumption}{Assumption}
\newtheorem{thm}{Theorem}
\newtheorem{lem}{Lemma}
\newtheorem{cor}[thm]{Corollary}
\newtheorem{deff}{Definition}
\newtheorem{algorithm}{Algorithm}
\newcommand{\pushright}[1]{\ifmeasuring@#1\else\omit\hfill$\displaystyle#1$\fi\ignorespaces}
\newcommand{\pushleft}[1]{\ifmeasuring@#1\else\omit$\displaystyle#1$\hfill\fi\ignorespaces}
\DeclareMathOperator*{\argmax}{arg\,max}
\DeclareMathOperator*{\argmin}{arg\,min}
\DeclareMathOperator*{\Var}{Var}
\def\bSig\mathbf{\Sigma}
\newcommand{\p}{^\prime}
\newcommand{\eps}{\varepsilon}
\newcommand{\diag}{\mathrm{diag}}
\newcommand{\tT}{\widetilde{T}}
\newcommand{\tY}{\widetilde{Y}}
\newcommand{\fE}{\mathfrak{E}}
\newcommand{\cA}{\mathcal{A}}
\newcommand{\cD}{\mathcal{D}}
\newcommand{\cF}{\mathcal{F}}
\newcommand{\cK}{\mathcal{K}}
\newcommand{\cN}{\mathcal{N}}
\newcommand{\cW}{\mathcal{W}}
\newcommand{\cX}{\mathcal{X}}
\newcommand{\bracket}[1]{\left[ #1 \right]}
\newcommand{\suma}{\sum_{a=1}^m}
\newcommand{\sumi}{\sum_{i=1}^n}
\newcommand{\fracn}{\frac{1}{n}}
\newcommand{\SAPE}{\mathrm{SAPE}}
\newcommand{\CMSE}{\mathrm{CMSE}}
\newcommand{\PAPE}{\mathrm{PAPE}}
\newcommand{\blind}{1}
\begin{document}

\def\spacingset#1{\renewcommand{\baselinestretch}%
{#1}\small\normalsize} \spacingset{1}

\addtolength{\textheight}{1.5in}%


\if1\blind
{
  \title{\bf Balanced Policy Evaluation and Learning for Right Censored Data}
  \author{Owen E. Leete\thanks{
    The authors gratefully acknowledge the following funding sources: the first author was funded in part by NIEHS grant T32 ES007018, second author was funded in part by NSF grant IIS-1846210, third author was funded in part by NIAID grants P30 AI050410 and R37 AI029168, fourth author was funded in part by NIAID grant P30 AI050410, and the last author was funded in part by NCI grant P01 CA142538.
    }\hspace{.2cm}\\
    Department of Biostatistics, \\ University of North Carolina at Chapel Hill\\
    and \\
    Nathan  Kallus \\
    School of Operations Research and Information Engineering,\\ Cornell University\\
    and \\
    Michael G. Hudgens \\
    Department of Biostatistics, \\ University of North Carolina at Chapel Hill\\
    and \\
    Sonia Napravnik \\
    Division of Infectious Diseases, Department of Medicine, \\ University of North Carolina at Chapel Hill \\
    and \\
    Michael R. Kosorok \\
    Department of Biostatistics, \\ University of North Carolina at Chapel Hill}
  \maketitle
} \fi

\if0\blind
{
  \bigskip
  \bigskip
  \bigskip
  \begin{center}
    {\LARGE\bf Balanced Policy Evaluation and Learning for Right Censored Data}
\end{center}
  \medskip
} \fi

\bigskip
\begin{abstract}
Individualized treatment rules can lead to better health outcomes when patients have heterogeneous responses to treatment. Very few individualized treatment rule estimation methods are compatible with a multi-treatment observational study with right censored survival outcomes. In this paper we extend policy evaluation methods to the right censored data setting. Existing approaches either make restrictive assumptions about the structure of the data, or use inverse weighting methods that increase the variance of the estimator resulting in decreased performance. We propose a method which uses balanced policy evaluation combined with an imputation approach to remove right censoring. We show that the proposed imputation approach is compatible with a large number of existing survival models and can be used to extend any individualized treatment rule estimation method to the right censored data setting. We establish the rate at which the imputed values converge to the conditional expected survival times, as well as consistency guarantees and regret bounds for the combined balanced policy with imputation approach. In simulation studies, we demonstrate the improved performance of our approach compared to existing methods. We also apply our method to data from the University of North Carolina Center for AIDS Research HIV Clinical Cohort.
\end{abstract}

\noindent%
{\it Keywords:}  Survival data, Individualized treatment rule, Nonparametric estimation.
\vfill

\newpage
\section{Introduction}
\label{sec:intro}

Differences in individual patient characteristics can result in significant heterogeneity in response to treatment. An individualized treatment rule (ITR) is a function which takes patient specific characteristics and recommends a treatment. The optimal ITR recommends the treatment that maximizes the benefit with respect to some clinical outcome. Treatment decisions are made based on published treatment guidelines which often list several available treatments from which the physician needs to choose based on expert opinion and personal experience. Using data driven approaches to help inform decision making can formalize the process and improve patient outcomes. There has been a considerable amount of effort devoted to the estimation of the optimal ITR \citep{qian2011,zhang2012,OWL,zhao2014,cui2017,kallus2017,athey2017}. 

To motivate our approach to estimating ITRs, we consider data from an observational cohort study of HIV+ patients who take different types of antiretroviral therapy (ART). The nature of this data presents several challenges to estimating ITRs. First, there are many treatment options. HIV infection is treated with combination therapies consisting of several drugs from multiple drug classes. Second, the data are frequently subject to right censoring. One measure of the effectiveness of treatment is the durability of an ART regimen, which measures how long a patient remains on the same ART. The durability is predictive of long-term patient morbidity and mortality, but it is prone to loss to follow-up. Third, the observational nature of the data means that the probability of a patient receiving a particular treatment may be related to measured and unmeasured factors including comorbid conditions and other therapies. This confounding by indication can make it difficult to establish causal relationships. 

Some early methods for estimating ITRs involved a two step approach that first uses regression methodology to estimate a conditional mean for the response under each treatment and then recommends the treatment with the best estimated conditional mean value \citep{qian2011}. Since these methods are based on regression modeling, they can be easily extended to meet the specific requirements of a wide array of data sources. However, there are a number of concerns about the performance of these methods \citep{beygelzimer2009,OWL}. Much of the recent work in ITR estimation is based on outcome weighted learning (OWL), which reformulates the estimation procedure as weighted classification \citep{OWL}. The original paper assumed fully observed, binary treatment data from a clinical trial. Several extensions have relaxed each of these assumptions individually \citep{zhao2014,wang2016,cui2017,liang2018}, but none of them relax all three assumptions simultaneously. Policy evaluation methods are 
designed to work with multiple treatments and observational data \citep{dudik2011,kallus2017,athey2017}. However, current extensions of these methods to right censored data use multiple weighting approaches, which results in high variance estimators, or rely on strong assumptions.

In this article, we propose a new method which addresses the shortcomings in existing approaches with respect to our motivating example. We focus on balanced policy evaluation and learning, an approach proposed by \citet{kallus2017}. Weighting approaches to policy evaluation can suffer from high variance when the weights are large. Existing variance reduction methods result in a biased estimator, but the amount of bias introduced is unknown, making it difficult to optimize the bias-variance tradeoff. Balanced policy evaluation seeks to quantify the bias-variance tradeoff nonparametrically by finding weights which minimize a measure of the conditional mean squared error (CMSE). We extend the balanced policy approach to right censored data by developing an imputation approach which replaces the censored observations with the conditional expected mean survival time. This allows us to define weighted and doubly robust estimators for policy evaluation for two or more treatments under right censoring. By more explicitly quantifying the bias-variance tradeoff and reducing the impact of censoring via imputation, the proposed approach is able to reduce the variance, leading to more efficient policy evaluation and learning.

While we focus on a single policy evaluation method, the imputation approach we propose could be used to extend any ITR estimation method to work with right censored data because it results in an estimated fully observed outcome vector. We demonstrate that the proposed approach results in improved performance when compared to existing methods. Nonetheless, the performance of balanced policy evaluation and learning can suffer in the presence of high dimensional, potentially noisy covariate information. We discuss a number of ways to mitigate this issue in the presence of right censored data, and demonstrate their use in simulation studies.

The remainder of this article is organized as follows. In Section \ref{sec:meth}, we discuss the policy evaluation approach to estimating ITRs, and introduce an imputation approach for right censored data. In Section \ref{sec:theory}, we develop the theory for the proposed method. Simulation studies are presented in Section \ref{sec:simulations}. We also illustrate our method using data from the University of North Carolina Center for AIDS Research HIV Clinical Cohort in Section \ref{sec:application}. The article concludes with a discussion of future work in Section \ref{sec:disc}. Some technical results are provided in the Appendix.

\vspace{-.5mm}\section{Methods}
\label{sec:meth}

\vspace{-.5mm}\subsection{Problem Setup and Notation}

Before discussing policy evaluation in the presence of right censoring, we first introduce some notation and describe policy evaluation and learning adapted to the fully observed survival time setting. Let $X \in \cX$ be the observed subject-level covariate vector, where $\cX$ is a $d$-dimensional vector space, and $A$ is the treatment, where $A\in\cA = [m] = \{1,\dots,m\}$ comes from a finite action space. For each $a\in\cA$ define $\tT(a) \in [0,\infty)$ as the unrestricted failure time that would be observed under treatment $a$, and let $\widetilde{\mathbf{T}} = (\tT(1),\dots,\tT(m))$. In many studies, the unrestricted failure time $\widetilde{\mathbf{T}}$ is rarely observable due to the existence of a maximum follow-up time $\tau < \infty$. We instead consider $\mathbf{T}= (T(1),\dots,T(m))$, a truncated version of $\mathbf{\tT}$ at $\tau$, i.e., $T(a) = \min(\tT(a),\tau)$ for $a\in[m]$. For a covariate vector $x$, treatment $a$, and failure time $T(a)$ we define the mean-outcome function as $\mu_a(x) = E[T(a) \,|\, X = x]$. 

In most survival studies there are some subjects for which observation of the failure is precluded by the occurrence of censoring events. For any treatment $a$, consider a censoring time $C(a)$ which is independent of $T(a)$ given  $X$ and $A$. Define the observed time $Y(a)=\min(T(a),C(a))$ and the failure indicator $\Delta=\mathbf{I}(T(a)\leq C(a))$. Let the failure time $T$ and censoring time $C$ be generated from distributions with conditional survival functions $S(t \,|\, x,a)$ and $G(c \,|\, x,a)$, respectively. The censored data consist of $n$ observations of covariate, treatment, observed time, and failure indicator: $\cD_n = \{(X_i, A_i, Y_i(A_i), \Delta_i)\}_{i=1}^n$. We assume that the data are drawn iid from a fixed distribution: $(X,A,\mathbf{T},\mathbf{C}) \sim P$, and that we observe 
$\left(X,A,Y(A)\right)$. This definition of $\left(X,A,Y(A)\right)$ implies the causal consistency assumption of the potential outcomes framework and the stable unit treatment value assumption (SUTVA), which ensures that the observed outcome for one subject is unaffected by the treatment assignments of the other subjects \citep{rubin1986}. We also assume that the data satisfy conditional exchangeability:
\begin{assumption} \label{as:exchangeability}
$T(a)\perp A \,|\, X$ for all $a \in \cA$.
\end{assumption}
\noindent This is commonly referred to as the no unmeasured confounders assumption, where the treatment decision is not influenced by the outcomes except through their mutual relationship with the observed covariates $X$. In our setting, conditional exchangeability is equivalent to there being 
an unknown propensity function $\varphi$ which, given covariate vector $x$, assigns treatment $a$  with probability $\varphi_a(x) = \mathbb{P}(A=a|X=x)$. 

A policy is a map $\pi: \cX\mapsto \Theta^m$ from the covariate space to a probability vector in the m-simplex, $\Theta^m = \{p\in[0,1]^m : \suma p_a = 1\}$. For an observation with covariate vector $x$, the policy $\pi$ specifies that treatment $a$ should be administered with probability $\pi_a(x)$. There are two main goals in this framework. The first goal, known as policy evaluation, is to estimate the average outcome that would be observed if treatments were assigned according to the policy $\pi$. Because the implementation of a bad policy can be costly or dangerous, new policies are generally evaluated using historical observations \citep{thomas2015}. 
The second goal, known as policy learning, attempts to identify a policy $\pi^*$  which maximizes the expected outcome with respect to some reward. In this article we will assume that larger outcomes are preferable, and thus the reward would be $T$ or $\log(T)$.

To formally define policy evaluation, consider a policy $\pi$ for which we wish to estimate the sample-average policy effect (SAPE),
$$\SAPE(\pi) = \fracn \sumi\suma \pi_a(X_i)\mu_a(X_i).$$ 
The SAPE quantifies the average outcome that would have been observed had the treatments been assigned according to a given policy $\pi$. The SAPE is strongly consistent for the population-average policy effect (PAPE), $\PAPE(\pi) = \mathbb{E}\left[\SAPE(\pi)\right] = \mathbb{E}\left[\suma \pi_a(X_i)\mu_a(X_i)\right]$.
If $\pi^*$ is such that  $\pi^*_a(x) > 0 \iff a = \argmax_{s\in[m]} \mu_s(x)$, then $\pi^*$ is an optimal policy. That is, $\pi^*$ maximizes $\SAPE(\pi)$ over all functions $\cX\mapsto\Theta^m$. For an optimal policy $\pi^*$, $R(\pi) = \SAPE(\pi^*) - \SAPE(\pi)$ is the regret of $\pi$. 
In policy learning, we wish to find a policy $\hat{\pi}$ that minimizes the expected regret.

\subsection{Existing Approaches to Policy Evaluation} \label{sec:PE_approaches}

Current approaches to policy evaluation generally fall into a few categories which are based on mean outcome modeling, weighting methods, or a combination thereof. The regression-based estimator first fits regression models $\hat{\mu}_a$ of $\mu_a$ for all $a\in[m]$. These models are then used to estimate the $\SAPE$ in a plug-in fashion
\begin{align}
    \psi^{\text{Reg}}(\pi) = \fracn\sumi\suma \pi_a(X_i)\hat{\mu}_a(X_i). \label{eq:reg}
\end{align}
This approach assumes that the regression models for each treatment are correctly specified, which can be a strong assumption when the number of treatments is large \citep{OWL}.

Weighting-based estimators use covariate and treatment data to find weights, $W(\pi,X_{1:n},T_{1:n})$, that reweight the outcome data to make it look as though it were generated by the policy being evaluated. These estimators are generally of the form
\begin{align}
    \hat{\psi}_W(\pi) = \frac{1}{n} \sumi W_i T_i, \label{eq:vanilla_weighted}
\end{align}
where the weights are often normalized to sum to $n$. With weights $W$ and regression models $\hat{\mu}_a$ we can define a doubly robust (DR) estimator 
\begin{align}
    \hat{\psi}^\text{DR}_{W,\hat{\mu}} = \fracn\sumi\suma \pi_a(X_i)\hat{\mu}_a(X_i) + \fracn \sumi W_i(T_i - \hat{\mu}_{A_i}(X_i)), \label{eq:doubly_robust}
\end{align}
which can be thought of as denoising the weighted estimator by subtracting the conditional mean from $T$, or debiasing the regression-based estimator using the reweighted residuals.

A common weighting-based approach is inverse propensity weighting (IPW) \citep{imbens2000}. By noting that $\SAPE(\pi) = \mathbb{E}\left[\fracn\sumi \pi_{A_i}(X_i)T_i(A_i)/\varphi_{A_i}(X_i)\,|\, X_{1:n}, A_{1:n}\right]$, for $\hat{\varphi}$, an estimate of the unknown propensity function $\varphi$, one can construct weights
\begin{align}
    W_i^\text{IPW} (\pi) = \pi_{A_i}(X_i)/\hat{\varphi}_{A_i}(X_i), \label{eq:IPW}
\end{align} which gives rise to the estimator $\hat{\psi}^{\text{IPW}} = \hat{\psi}_{W^{\text{IPW}}(\pi)}$. Since the propensities appear in the denominator, small values of $\hat{\varphi}$ can lead $\hat{\psi}^{\text{IPW}}$ to have a large variance. Some ad-hoc solutions for reducing the variance exist, such as clipping, where $\hat{\varphi}_{A_i}(X_i)$ is replaced by $\max\{M,\hat{\varphi}_{A_i}(X_i)\}$ for some $M$, but these methods generally have the effect of increasing the bias \citep{li2015}.

\citet{kallus2017} proposed a weighting-based estimator which simultaneously controls both the bias and variance. For a generic weighted estimator $\hat{\psi}_W(\pi)$, consider the conditional mean squared error (CMSE),
\begin{align}
\CMSE(\hat{\psi},\pi) = \mathbb{E}\left[ \left(\hat{\psi} - \SAPE(\pi) \right)^2 \;\middle|\; X_{1:n},A_{1:n} \right]. \label{eq:CMSE}
\end{align}
Balanced policy evaluation takes the $\CMSE$ and decomposes it into squared bias and variance components. The bias term relies on the weights $W$ and the unknown mean outcome function $\mu$. Rather than directly estimating $\mu$, the balanced policy approach defines a worst case bias by choosing the mean outcome function from a bounded class of functions that maximizes the squared bias. Once the worst case bias term has been identified, the variance term is estimated, and balanced policy weights are found by minimizing the worst case squared bias and variance with respect to the weights $W$. Additional details for balanced policy evaluation will be provided in Section \ref{sec:BP_RCD}.

\subsection{Extensions to Right Censored Data}

Analyzing right censored data without accounting for the missing data results in bias due to censoring \citep{fleming2011}, thus consistent policy evaluation requires methods designed to correct for the bias due to censoring.

Regression-based methods can be extended to right censored data by using regression models designed for censored data. Common model choices include parametric and semiparametric models such as the accelerated failure time and Cox proportional hazards models \citep{kalbfleisch2011}. However, these models make restrictive assumptions about the structure of the failure time distributions which are often difficult to verify. Non-parametric models of the survival function have been proposed, such as the kernel conditional Kaplan-Meier estimate \citep{dabrowska1987} or random forest based methods \citep{hothorn2004,ishwaran2008,zhu2012,steingrimsson2016,cui2019} which define localized versions of simple survival models, such as the Kaplan-Meier estimator. Non-parametric models make less restrictive assumptions, but they are less efficient and unlikely to produce good results in small sample sizes.

Inverse propensity weighted methods can be extended to right censored data by looking only at the observed failure times, and weighting based on the probability of observation. Inverse probability of censoring weighting (IPCW), is motivated by an idea similar to IPW. Since $\mathbb{E}[\Delta\,|\, X,A,T] = G(T\,|\, X,A)$ we have that $\mathbb{E}[Y\Delta/G(Y\,|\, X,A)\,|\, X,A] = \mathbb{E}[T\,|\, X,A].$ IPCW can be combined with IPW by defining weights 
\begin{align}
W_i^\text{IPW,IPCW} (\pi) = \frac{\pi_{A_i}(X_i)\Delta_i}{\hat{\varphi}_{A_i}(X_i)\hat{G}(Y_i\,|\, X_i,A_i)}. \label{eq:IPW_IPCW}
\end{align}
As long as the models for $\hat{\varphi}$ and $\hat{G}$ are correctly specified, the resulting estimator, $\hat{\psi}_W^\text{IPW,IPCW}(\pi)$, will be consistent for the SAPE \citep{anstrom2001}. However, with two weights in the denominator, the issues with high variance will be compounded.

Balanced policy weights are incompatible with IPCW. Depending on when the IPC  weights are applied,
balanced policy weights are either no longer identifiable, or they lose the property of minimizing the worst case bias and variance. To extend balanced policy to right censored data, we use an imputation approach similar to one proposed by \cite{cui2017}. Consider an imputed failure time vector
$\tY_i \equiv 
\Delta_i Y_i + (1-\Delta_i)E\bracket{T_i\;\middle|\;X_i,A_i,Y_i,T_i>Y_i},$
where the censoring times are replaced with the conditional expected failure time given the covariates, treatment, and the knowledge that the subject survived at least until time $C_i$. 


Estimating the expected failure time for the censored subjects uses a conditional estimate of the survival function.
In this article we focus on estimation of the conditional expected failure time using random forest methods for right censored data because they provide nonparametric estimates with good performance for higher dimensional data \citep{cui2019}. Random survival forests (RSF) \citep{ishwaran2008} and recursively imputed survival trees (RIST) \citep{zhu2012} are based on extensions of random forests (RF) \citep{breiman2001} and extremely randomized trees (ERT) \citep{geurts2006}, respectively, where the splitting rule for each node is chosen to maximize the log-rank test statistic. \cite{cui2019} note that a splitting rule based on the standard log-rank test statistic can produce biased results because it does not appropriately control for the censoring distribution, and instead propose a splitting rule which corrects for the bias due to censoring. 

Regardless of the method used to estimate the survival function, given $\widehat{S}(\cdot\,|\, X,A)$, the conditional expected survival time can be estimated as
\begin{align*}
\widehat{E}\bracket{T_i\;\middle|\;X_i,A_i,Y_i,T_i>Y_i} = \frac{\int_Y^\tau t\, d\widehat{F}(t\,|\, X,A) + \tau \widehat{S}(\tau\,|\, X,A)}{\widehat{S}(Y\,|\, X,A)},
\end{align*}
and the estimated imputed failure time vector can be defined as
\begin{align}
\widehat{Y}_i = \Delta_i Y_i + (1-\Delta_i)\widehat{E}\bracket{T_i\;\middle|\;X_i,A_i,Y_i,T_i>Y_i}. \label{eq:hatY}
\end{align}
By using $\widehat{Y}$ in place of $T$ in equation \eqref{eq:vanilla_weighted} we can now define a balanced policy approach to right censored data.

\subsection{Balanced Policy Evaluation with Right Censored Data} \label{sec:BP_RCD}

Recall from equation \eqref{eq:CMSE} that we defined the CMSE as the squared difference between a weighted estimator $\psi_W(\pi)$ and the $\SAPE(\pi)$. We define
\begin{align*}
B_a(W,\pi_a;\mu_a) = \fracn \sum_{i=1}^n\left(W_i\delta_{A_ia} - \pi_a(X_i)\right)\mu_a(X_i) \quad\text{ and }\quad B(W,\pi;\mu) &= \sum_{a=1}^m B_a(W,\pi_a;\mu_a),
\end{align*} 
where $\delta_{ij}=\mathbf{I}[i=j]$ is the Kronecker delta. Letting $\eps_i = T_i - \mu_a(X_i)$, from \citet{kallus2017} Theorem 1, we have that
$$\hat{\psi}_W - \SAPE(\pi) = B(W,\pi;\mu) + \fracn \sum_{i=1}^n W_i\eps_i,$$ and, under assumption \ref{as:exchangeability},
\begin{align*}
\CMSE(\hat{\psi}_W,\pi) = B^2(W,\pi;\mu) + \frac{1}{n^2} W^T\Sigma W,
\end{align*}
where $\Sigma = \diag\left(\mathbb{E}[\eps_i^2\,|\, X_i,A_i]_{i=1}^n\right)$. 
This decomposes the CMSE into squared bias and variance components. The measure of the variance will be given by the norm of weights $W^T\Lambda W$ for a positive semidefinite (PSD) matrix $\Lambda$. Instead of directly estimating the bias term, we find the worst case bias by maximizing $B^2(W,\pi;f)$ over all functions $f\in\cF$ for a bounded function class $\cF$.

To define the worst case bias, we look at functions, $f$, in the unit ball of a direct product of reproducing kernel Hilbert spaces (RKHS):
\begin{align}
\|f\|_{p,\cK_{1:m},\gamma_{1:m}} = \left( \sum_{a=1}^m \|f_a\|^p_{\cK_a} / \gamma_a^p \right)^{1/p}, \label{eq:norm_of_f}
\end{align}
where $\| \cdot \|_{\cK_a}$ is the norm of the RKHS given by the PSD kernel $\cK(\cdot,\cdot): \cX^2 \mapsto \mathbb{R}$. Any choice of $\|\cdot\|$ 
gives rise to the worst case CMSE objective:
\begin{align*}
\mathfrak{E}^2(W,\pi;\| \cdot\|,\Lambda) = \sup_{\|f\|\leq1} B^2(W,\pi;f) + \frac{1}{n^2} W^T\Lambda W.
\end{align*}

Balanced policy evaluation is given by the estimator $\hat{\psi}_{W^*(\pi;\| \cdot\|,\Lambda)}$ where $W^*(\pi) = W^*(\pi;\| \cdot\|,\Lambda)$ is the minimizer of $\fE^2(W,\pi;\| \cdot\|,\Lambda)$ over the space of all weights $W$ that sum to $n$, $\cW= \{ W\in \mathbb{R}^n_+: \sum_{i=1}^n W_i=n\} =n \Theta^n$. Specifically,
\begin{align}
W^*(\pi;\| \cdot\|,\Lambda) \in \argmin_{W\in\cW} \fE^2(W,\pi;\| \cdot\|,\Lambda). \label{eq:BP_weights}
\end{align}
With the imputed survival time vector, $\widehat{Y}$, and the balanced policy weights, we can define the weighted and doubly robust balanced policy estimators for right censored data:
\begin{align}
    \widehat{\psi}_{W^*,\widehat{Y}} &= \fracn \sumi W^*\widehat{Y}_i, \label{eq:BP_Estimator} \\
    \widehat{\psi}^\text{DR}_{W^*,\widehat{\mu}}(\pi) &= \fracn\sumi\suma \pi_a(X_i)\widehat{\mu}_a(X_i) + \fracn \sumi W^*_i(\widehat{Y}_i - \widehat{\mu}_{A_i}(X_i)). \label{eq:BP_DR_Estimator}
\end{align}


Next we consider policy learning. For a given policy class $\Pi\subset[\cX\mapsto\Theta^m]$, let $\hat{\pi}\in\Pi$ be the policy which maximizes the balanced policy evaluation estimator, $\hat{\phi}_W$, using the weights defined in equation \eqref{eq:BP_weights}. We formulate this as a bilevel optimization problem:
\begin{align}
    \hat{\pi}^\text{balanced} \in \argmax_\pi\left\{ \hat{\psi}_W :\pi\in\Pi,W\in\argmin_{W\in\cW} \fE^2(W,\pi;\|\cdot\|,\Lambda) \right\}, \label{eq:learner} \\
    \hat{\pi}^\text{balanced-DR} \in \argmax_\pi\left\{ \hat{\psi}_{W,\hat{\mu}} :\pi\in\Pi,W\in\argmin_{W\in\cW} \fE^2(W,\pi;\|\cdot\|,\Lambda) \right\}. \label{eq:learner_DR}
\end{align}

It is difficult to maximize the policy $\pi$ with respect to the full policy class $\Pi=[\cX\mapsto\Theta^m]$, so it is common to use a reduced class such as the parameterized policy class 
\begin{align}
    \Pi_\text{logit} = \{\pi_a(x;\beta_a) \propto \exp(\beta_{a0} + \beta_a^Tx)\}. \label{eq:pi_logit}
\end{align} 
Using a reduced policy class limits the flexibility of balanced policy learning, but for moderate sample sizes the reduced complexity can allow for more accurate estimation of $\pi^*$.

A high-level description of the proposed method is given in Algorithm \ref{alg:RCBP} below. 

\begin{algorithm} \label{alg:RCBP}
\emph{Pseudo algorithm for the Balanced Policy Evaluation and Learning with Imputation}

\noindent{\bf Step 1:} Use $\{(X_i, A_i , Y_i , \Delta_i \}^n_{i=1}$ to fit a model for the conditional survival function $S(t\,|\, X,A)$, and create the imputed survival time vector, $\widehat{Y}$, by using $\widehat{S}(t\,|\, X,A)$ to estimate the conditional expected survival times $\widehat{E}\bracket{T_i\;\middle|\; X_i,A_i,Y_i,T_i>Y_i}$ for the censored observations. \\
\noindent{\bf Step 2:} For a norm, $\|\cdot\|_{p,\cK_{1:m},\gamma_{1:m}}$, and diagonal matrix, $\Lambda$, define  a policy class $\Pi$, and select $\pi_0\in\Pi$ \\
\noindent{\bf Step 3:} Find the balanced policy weights associated with $\|\cdot\|_{p,\cK_{1:m},\gamma_{1:m}}$, $\Lambda$, and $\pi_i$ and use the weights $W^*$ to calculate $\widehat{\psi}_{W^*}(\pi_i)$ according to equation  \eqref{eq:BP_Estimator}. \\
\noindent{\bf Step 4:} Calculate the gradient of the policy, $\pi_i$, with respect to $\widehat{\psi}_{W^*}(\pi_i)$ and use gradient ascent to find an improved policy $\pi_{i+1} \in \Pi$ \\
\noindent{\bf Step 5:} Repeat steps 3-4 until convergence to an estimated optimal policy $\widehat{\pi}$.
\end{algorithm}


\subsection{High Dimensional Considerations}

Because the balanced policy weights are defined using a kernel norm $\|\cdot\|_\cK$, which depends only on the sample covariate space $\mathcal{X}$, balanced policy evaluation and learning can be sensitive to the presence of covariates which contain redundant or noisy information. The negative 
effects of high-dimensional noisy data can be reduced in several ways including the choice of the kernel norm, and feature elimination, both of which we detail here.


Recall from equation \eqref{eq:norm_of_f} that we defined the norm of a function with respect to the direct product of RKHSs. One commonly used kernel is the Mahalanobis radial basis function (RBF) kernel
\begin{equation}
    \cK_s (x,x\p) = \exp(-(x-x\p)^T\widetilde{\Sigma}^{-1}(x-x\p)/s^2). \label{eq:mRBF}
\end{equation} In Lemma 1 point (3) of \citet{kallus2017}, it was shown that, for the norm defined in \eqref{eq:norm_of_f}, if $p = 2$ and $f_a$ has a Gaussian process prior, then kernel hyperparameters, such as $s$ and $\widetilde{\Sigma}$, can be selected using the marginal likelihood principle. Choosing the hyperparameters in this manner 
reduces the influence of the unimportant predictors. In simulations this has greatly improved the accuracy of the balanced policy estimator with respect to evaluation. 
However, since the redundant or noisy predictors are not eliminated, 
the dimension of the search space remains high, which can still adversely impact policy learning. 

Reducing the dimension of the search space using variable selection (VS) methods can improve the performance of policy learning. Nearly any VS method could be used, but to keep our method as general as possible we examined non-parametric VS methods such as the risk-recursive feature elimination algorithm (risk-RFE) proposed by \citet{dasgupta2019}. Risk-RFE uses kernel machines to rank the importance of the features based on the regularized risk. Once the features are ranked in order of estimated importance, change point estimation methods can be used to identify the important predictors based on the increase in regularized risk at each step. For additional details see \citet{dasgupta2019}.

In our implementation for high dimensional data, we used methods where tuning hyperparameters and variable selection required a complete outcome vector. Therefore, these methods were implemented prior to Step 3 of Algorithm \ref{alg:RCBP}, but methods which are compatible with right censoring could be implemented prior to Step 1.%
\section{Theory}%
\label{sec:theory}%
The properties of the proposed estimator will depend on the method used to estimate $\widehat{Y}$, so to facilitate the discussion we make the following assumption:%
\begin{assumption}[Convergence rate]%
There exists some sequence $r_n\rightarrow 0$ such that%
 \begin{align*}%
     \sup_{t<\tau}E_{X,A}\left|\widehat{S}_n(t\,|\, X,A) - S(t\,|\, X,A)\right| = O_p(r_n).%
 \end{align*}%
 \label{as:convergence_Y}%
\end{assumption} \vspace*{-10mm}
\noindent 
Assumption \ref{as:convergence_Y} may seem strong, but it is met for a large number of methods 
for estimating $S(t\,|\, X,A)$. For example, the assumption holds for AFT and Cox PH models with $r_n=1/\sqrt{n}$. Assumption \ref{as:convergence_Y} is also met for non-parametric methods, such as kernel conditional Kaplan-Meier and random forest based methods. For the kernel conditional Kaplan-Meier, $r_n=
\log(n^{d/(d+4)})^{1/2}/n^{2/(d+4)}$, where $d$ is the covariate dimension \citep{dabrowska1989}. The convergence rate for the bias corrected random survival forest method proposed by \citet{cui2019} depends on the splitting criteria, but the rate for the theoretically optimal splitting criteria is $r_n=n^{-1/(d+2)}$. Convergence rates for other random forest methods \citep{ishwaran2008,zhu2012} have not yet been established, but we believe that the rates will be similar to those found in \citet{cui2019}.

We show that a convergence rate for $\widehat{S}_n(t\,|\, X,A)$ is enough to guarantee the convergence rate of $\widehat{Y}$; however, Theorems 12.4 and 12.5 of \citet{kosorok2008} imply that it is also sufficient to have a convergence rate for an estimate of the cumulative hazard function, $\widehat{\Lambda}(t\,|\, X,A)$.

\begin{lem}
If there exists some sequence $r_n\rightarrow 0$ such that,
\begin{align*}
     \sup_{t<\tau}E_{X,A}\left|\widehat{S}(t\,|\, X,A) - S(t\,|\, X,A)\right| &= O_p(r_n)
\end{align*}
then
\begin{align*}
     \left|\fracn\sumi (1-\Delta_i) W^*_i(\widehat{Y}_i-T_i)\right| &= O_p(r_n).
\end{align*} \label{thm:Preservaton_of_rates}
where $W^*=W^*(\pi;\| \cdot\|,\Lambda)$ are the balanced policy weights.
\end{lem}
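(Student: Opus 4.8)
The plan is to split the imputation error into a term driven by the accuracy of $\widehat{S}$ and a conditionally centered ``noise'' term. Since every summand carries the factor $1-\Delta_i$, only censored units contribute, and for those $\widehat{Y}_i=\widehat{m}_i:=\widehat{E}\bracket{T_i\;\middle|\;X_i,A_i,Y_i,T_i>Y_i}$; writing $m_i:=E\bracket{T_i\;\middle|\;X_i,A_i,Y_i,T_i>Y_i}$ for the population analogue (so that $\tY_i=\Delta_iY_i+(1-\Delta_i)m_i$),
\begin{equation*}
\fracn\sumi(1-\Delta_i)W^*_i(\widehat{Y}_i-T_i)=\underbrace{\fracn\sumi(1-\Delta_i)W^*_i(\widehat{m}_i-m_i)}_{=:\,\mathrm{(I)}}+\underbrace{\fracn\sumi(1-\Delta_i)W^*_i(m_i-T_i)}_{=:\,\mathrm{(II)}},
\end{equation*}
and I would bound $\mathrm{(I)}$ and $\mathrm{(II)}$ by $O_p(r_n)$ separately.

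\noindent\emph{Term (I).} First I would apply integration by parts to the given closed form of the conditional expectation to obtain the identity $m_i=Y_i+\int_{Y_i}^\tau S(t\,|\,X_i,A_i)\,dt\big/S(Y_i\,|\,X_i,A_i)$, and likewise for $\widehat{m}_i$ with $\widehat{S}$; the $Y_i$ cancels in $\widehat{m}_i-m_i$, and placing the two ratios over a common denominator (using $S\le1$ and $\int_{Y_i}^\tau S\le\tau$) gives
\begin{equation*}
|\widehat{m}_i-m_i|\;\le\;\frac{\int_{Y_i}^\tau\bigl|\widehat{S}(t\,|\,X_i,A_i)-S(t\,|\,X_i,A_i)\bigr|\,dt+\tau\bigl|\widehat{S}(Y_i\,|\,X_i,A_i)-S(Y_i\,|\,X_i,A_i)\bigr|}{S(Y_i\,|\,X_i,A_i)\,\widehat{S}(Y_i\,|\,X_i,A_i)}.
\end{equation*}
The true denominator is bounded below by a standard positivity condition: since $T(a)=\min(\tT(a),\tau)$ one has $S(t\,|\,x,a)\ge\mathbb{P}(\tT(a)\ge\tau\,|\,X=x)\ge\underline{S}>0$ for all $t<\tau$ (the mild requirement that $\tau$ not be so large that almost everyone has already failed), and $\widehat{S}(Y_i\,|\,X_i,A_i)$ is bounded below uniformly over the censored units with probability tending to one — or after harmlessly flooring $\widehat{S}$ at a constant below $\underline{S}$, which only decreases its error. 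Using $W^*\ge0$, $\sumi W^*_i=n$, and a stability property of the balanced policy weights, $\max_iW^*_i=O_p(1)$ (which holds under overlap and can be enforced by a box constraint in \eqref{eq:BP_weights}),
\begin{equation*}
|\mathrm{(I)}|\;\lesssim\;\bigl(\max_iW^*_i\bigr)\,\fracn\sumi(1-\Delta_i)\Bigl(\int_0^\tau\bigl|\widehat{S}(t\,|\,X_i,A_i)-S(t\,|\,X_i,A_i)\bigr|\,dt+\tau\bigl|\widehat{S}(Y_i\,|\,X_i,A_i)-S(Y_i\,|\,X_i,A_i)\bigr|\Bigr).
\end{equation*}
Taking expectations, Tonelli's theorem gives $E_{X,A}\int_0^\tau|\widehat{S}-S|\,dt=\int_0^\tau E_{X,A}|\widehat{S}-S|\,dt\le\tau\sup_{t<\tau}E_{X,A}|\widehat{S}-S|$, and the point-evaluation term is handled similarly using boundedness of the conditional density of $Y$ on $[0,\tau)$; the hypothesised $O_p(r_n)$ rate and Markov's inequality then yield $|\mathrm{(I)}|=O_p(r_n)$. (If $\widehat{S}_n$ is fit on the same data, a cross-fitting split — or a routine stochastic-equicontinuity argument — decouples $\widehat{S}_n$ from the empirical average without changing the rate.)

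\noindent\emph{Term (II).} Here I would condition on the $\sigma$-field $\cG$ generated by $\{(X_i,A_i,Y_i,\Delta_i)\}_{i=1}^n$ together with the fitted $\widehat{S}_n$: both $W^*_i$ and $m_i$ are $\cG$-measurable, and — crucially — neither depends on the unobserved failure time of a censored unit. By Assumption~\ref{as:exchangeability} and the conditional independence of $C(a)$ and $T(a)$ given $X,A$, the conditional law of $T_i$ given $\cG$ on $\{\Delta_i=0\}$ is that of $T_i\,|\,X_i,A_i$ truncated to $(Y_i,\tau]$, so $E[T_i\,|\,\cG]=m_i$ and $E[\mathrm{(II)}\,|\,\cG]=0$; these truncated values are conditionally independent across censored units and lie in $[0,\tau]$, whence $\Var(\mathrm{(II)}\,|\,\cG)\le(\tau^2/n^2)\norm{W^*}_2^2$ and Chebyshev gives $\mathrm{(II)}=O_p(\norm{W^*}_2/n)$. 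Since $\norm{W^*}_2^2=O_p(n)$ under the same weight-stability property (or from optimality of $W^*$ over $\cW$ with $\Lambda\succeq cI$ and the rate of decay of the minimised worst-case CMSE $\fE^2(W^*;\|\cdot\|,\Lambda)$), and $r_n\gtrsim n^{-1/2}$ up to logarithmic factors for every estimator admissible under Assumption~\ref{as:convergence_Y}, this gives $\mathrm{(II)}=O_p(n^{-1/2})=O_p(r_n)$. Combining the two bounds proves the lemma.

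The two substantive points — where the argument could fail — are: (a) bounding the \emph{estimated} denominator $\widehat{S}(Y_i\,|\,X_i,A_i)$ uniformly over the censored units, which the pointwise ($L^1$-in-$(X,A)$) form of Assumption~\ref{as:convergence_Y} does not supply directly and which must come from positivity plus an ``eventually bounded below'' event or a benign flooring of $\widehat{S}$; and (b) controlling $\norm{W^*}_2$ / the weight stability, so that the irreducible imputation noise $\mathrm{(II)}$ is dominated — the only place one needs more than bookkeeping, namely the optimality/stability of the balanced policy weights. Everything else reduces to the integration-by-parts identity, Tonelli's theorem, Markov's and Chebyshev's inequalities, and the constraint $\sumi W^*_i=n$.
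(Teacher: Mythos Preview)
Your decomposition into (I) and (II) matches the paper's split $\widehat Y_i-T_i=(\widehat Y_i-\tY_i)+(\tY_i-T_i)$, and your treatment of (II) --- conditional centering on the observed-data $\sigma$-field and a Chebyshev bound driven by $\|W^*\|_2^2/n=O_p(1)$ --- is essentially the paper's argument (the paper conditions on $\sigma(\cup_iF_i)$, shows the cross terms vanish, and bounds the diagonal variance the same way). The genuine divergence is in Term (I). The paper separates weights from imputation error by Cauchy--Schwarz, using only $\frac{1}{n}\|W^*\|_2^2=O_p(1)$ (inherited from \citet{kallus2017}), and then controls $\frac{1}{n}\sum_i(1-\Delta_i)(\widehat Y_i-\tY_i)^2$: it invokes Hadamard differentiability of the map $\widehat S\mapsto\widehat m$, an additional leave-one-out stability assumption (Assumption~\ref{as:leave_one_out}) to decouple $\widehat Y_i$ from the $i$th observation, and a Markov bound on the event $\{\widehat S_{n-1}(\tau\,|\,X,A)\le\delta/2\}$ to tame the estimated denominator. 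You instead pull out $\max_iW^*_i$ and bound the empirical \emph{first} moment of $|\widehat m_i-m_i|$ directly via the integration-by-parts identity and Tonelli, substituting cross-fitting for the leave-one-out assumption and flooring for the denominator probability bound. Your route is more elementary --- no functional delta method --- and delivers the rate $r_n$ without passing through a square root, but it rests on $\max_iW^*_i=O_p(1)$, which is strictly stronger than the $\ell_2$ control the paper uses and, as you concede, may have to be enforced by a box constraint rather than derived from optimality. The paper buys the weaker weight condition at the price of an extra structural assumption on the survival estimator and a heavier analytic toolkit.
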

To prove Lemma \ref{thm:Preservaton_of_rates} we need an additional assumption.
\begin{assumption} \label{as:leave_one_out}
Let $\widehat{Y}^{(i)}_i = \widehat{E}^{(i)}_{n-1}(T\,|\, X_i,A_i,Y_i,\Delta_i),$
where $\widehat{E}^{(i)}_{n-1}$ is based on all observations except for the $i^\text{th}$ one. For all $n\geq1$ and any $1\leq i\leq n$, $$E\left[(1-\Delta_i)(\widehat{Y}_i-\tY_i)^2\right]\leq E\left[(1-\Delta_i)(\widehat{Y}^{(i)}_i-\tY_i)^2\right].$$%
\end{assumption}
\noindent Since $\widehat{Y}_i$ is based on more information than $\widehat{Y}^{(i)}_i$, this assumption should be met provided $\widehat{E}_n$ is consistent. The proof of Lemma \ref{thm:Preservaton_of_rates} can be found in the appendix.

Consistent evaluation requires a weak form of overlap between the unknown propensity function and the target policy being evaluated:
\begin{assumption}[Weak overlap]
$\mathbb{P}(\varphi_a(X)>0 \vee \pi_a(X)=0) = 1$ for all $a\in[m]$, and \\ $E[\pi^2_A(X)/\varphi^2_A(X)]<\infty.$ \label{as:weak_overlap}
\end{assumption}
\noindent Another requirement is well-specification of the mean-outcome function. For balanced policy evaluation the mean-outcome function is well-specified if it is in the RKHS product used to compute $W^*(\pi)$. Otherwise, consistency is also guaranteed if the RKHS product consists of $C_0$-universal kernels, defined below, such as the RBF kernel. 
\begin{deff} A PSD kernel $\cK$ on a Hausdorff $\cX$ (e.g., $\mathbb{R}^d$) is \emph{$C_0$-universal} if, for any continuous function $g : \cX \mapsto \mathbb{R}$ with compact support (i.e., for some compact $C$, $\{x : g(x) =
0 \subseteq C\}$) and $\eta>0$, there exists $n\p,\alpha_1,x_1,\dots,\alpha_{n\p},x_{n\p}$ such that $\sup_{x\in\cX} \left| \sum_{j=1}^{n\p} \alpha_j\cK(x_j,x) - g(x) \right|\leq\eta$.
\end{deff}
\begin{thm} Fix $\pi$ and let $W_n^*(\pi) = W_n^*(\pi;\|f\|_{p,\cK_{1:m},\gamma_{n,1:m}},\Lambda_n)$ with $0\prec\underline{\kappa}I\preceq\Lambda_n\preceq\overline{\kappa}I$, $0<\underline{\gamma}\leq \gamma_{n,a}\leq \overline{\gamma}\;\, \forall a\in[m]$ for each $n$. Suppose Assumptions \ref{as:exchangeability}-\ref{as:weak_overlap} hold, $\Var(T|X)$ is a.s. bounded, $\mathbb{E}[\sqrt{\cK_a(X,X)}]<\infty$, and $\mathbb{E}[\cK_a(X,X)\pi^2_A(X)/\varphi^2_A(X)]<\infty$. Then the following two results hold:
\begin{itemize}
    \item If $\|\mu_a\|_{\cK_a}<\infty$ for all $a \in [m]$, then \hfill $\hat{\psi}_{W^*_n(\pi),\widehat{Y}} - \SAPE(\pi) = O_p(r_n+1/\sqrt{n})$,
    \item If $\cK_a$ is $C_0$-universal for all $a \in [m]$, then \hfill $\hat{\psi}_{W^*_n(\pi),\widehat{Y}} - \SAPE(\pi) = o_p(1)$.
\end{itemize} \label{thm:VanillaRates}
\end{thm}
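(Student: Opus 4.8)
The plan is to reduce to the uncensored (truncated) case treated by \citet{kallus2017}, using Lemma~\ref{thm:Preservaton_of_rates} to absorb the imputation error. Write
\[
\hat{\psi}_{W^*_n(\pi),\widehat{Y}} - \SAPE(\pi)
= \Big(\fracn\sumi W^*_i T_i - \SAPE(\pi)\Big) + \fracn\sumi W^*_i(\widehat{Y}_i - T_i).
\]
Because $\Delta_i=1$ forces $Y_i=T_i$ and hence $\widehat{Y}_i=T_i$, the second term equals $\fracn\sumi(1-\Delta_i)W^*_i(\widehat{Y}_i-T_i)$, which is $O_p(r_n)$ directly by Lemma~\ref{thm:Preservaton_of_rates}. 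It therefore suffices to show that the first term, which is exactly the fully observed weighted estimator error $\hat\psi_{W^*_n(\pi)}-\SAPE(\pi)$, is $O_p(1/\sqrt n)$ under the first hypothesis and $o_p(1)$ under the second; combining the two contributions then yields the stated rates.

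For the first term, apply the decomposition quoted from \citet{kallus2017} Theorem~1: $\hat\psi_W - \SAPE(\pi) = B(W,\pi;\mu) + \fracn\sumi W_i\eps_i$ with $\eps_i = T_i-\mu_{A_i}(X_i)$. The minimizer $W^*_n(\pi)$ is a function of $(X_{1:n},A_{1:n})$ only, since $\fE^2(W,\pi;\|\cdot\|,\Lambda_n)$ depends on the sample solely through covariates and treatments (the tuning objects $\Lambda_n,\gamma_{n,\cdot}$ being built from $(X_{1:n},A_{1:n})$). Hence, conditionally on $(X_{1:n},A_{1:n})$, the noise term $\fracn\sumi W^*_i\eps_i$ has mean zero and variance $n^{-2}(W^*_n)^T\Sigma W^*_n$ with $\Sigma=\diag(\mathbb E[\eps_i^2\mid X_i,A_i])$. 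Boundedness of $\Var(T\mid X)$ gives $\Sigma\preceq(\bar\sigma^2/\underline\kappa)\Lambda_n$, so this variance is at most $(\bar\sigma^2/\underline\kappa)\fE^2(W^*_n(\pi);\|\cdot\|,\Lambda_n)$, and a conditional Chebyshev argument makes the noise term $O_p(\fE(W^*_n(\pi)))$.

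Everything now reduces to bounding $\fE^2(W^*_n(\pi))$. By optimality of $W^*_n$ over $\cW$, it is at most $\fE^2(\tilde W_n)$ for the sum-to-$n$ renormalization of the true-propensity IPW weights $\tilde W_{n,i}\propto\pi_{A_i}(X_i)/\varphi_{A_i}(X_i)$, which exist and are nonnegative a.s.\ under Assumption~\ref{as:exchangeability} and Assumption~\ref{as:weak_overlap} (even though not computable). The variance part $n^{-2}\tilde W_n^T\Lambda_n\tilde W_n\le \bar\kappa\, n^{-2}\sum_i\tilde W_{n,i}^2$ is $O_p(1/n)$ by the law of large numbers and $\mathbb E[\pi_A^2(X)/\varphi_A^2(X)]<\infty$. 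For the worst-case squared bias, after bounding $\gamma_{n,a}\le\bar\gamma$ I would write $\sup_{\|f\|\le1}B^2(\tilde W_n,\pi;f)$ as a quadratic form in the vectors $v_a=(\tilde W_{n,i}\delta_{A_ia}-\pi_a(X_i))_i$ against the kernel Gram matrices, observe that $\mathbb E[v_{a,i}\mid X_i]=0$ so that off-diagonal contributions vanish by independence across $i$, and control the diagonal using $\mathbb E[\cK_a(X,X)\pi_A^2(X)/\varphi_A^2(X)]<\infty$ and $\mathbb E[\sqrt{\cK_a(X,X)}]<\infty$ (the renormalization perturbing these quantities only by $O_p(1/\sqrt n)$). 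This gives $\fE^2(W^*_n(\pi))=O_p(1/n)$, so the noise term is $O_p(1/\sqrt n)$; as a byproduct $\Lambda_n\succeq\underline\kappa I$ yields $\fracn\sumi(W^*_i)^2=O_p(1)$.

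It remains to handle the bias $B(W^*_n,\pi;\mu)$. If $\|\mu_a\|_{\cK_a}<\infty$ for all $a$, then $c_n:=\|\mu\|_{p,\cK_{1:m},\gamma_{n,1:m}}\le\underline\gamma^{-1}(\sum_a\|\mu_a\|_{\cK_a}^p)^{1/p}$ is bounded, and $|B(W^*_n,\pi;\mu)|=c_n\,|B(W^*_n,\pi;\mu/c_n)|\le c_n\,\fE(W^*_n(\pi))=O_p(1/\sqrt n)$, giving the first bullet. If instead each $\cK_a$ is $C_0$-universal, fix $\eta>0$, choose a compact $C$ with $\mathbb P(X\notin C)\le\eta$, and use $C_0$-universality together with $\mu_a\le\tau$ to produce $g_a$ in the RKHS with $\sup_{C}|g_a-\mu_a|\le\eta$ and $\|g_a\|_\infty$ bounded; then split $B(W^*_n,\pi;\mu)=B(W^*_n,\pi;g)+B(W^*_n,\pi;\mu-g)$. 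The first piece is $O_p(\|g\|_{p,\cK,\gamma_n}/\sqrt n)=o_p(1)$ for fixed $\eta$, and the second is bounded, via $\fracn\sumi W^*_i=\fracn\sumi\suma\pi_a(X_i)=1$, $\fracn\sumi\mathbf{I}(X_i\notin C)\to\mathbb P(X\notin C)$, $\fracn\sumi(W^*_i)^2=O_p(1)$, and Cauchy--Schwarz, by $\eta+O_p(\sqrt\eta)$; sending $n\to\infty$ then $\eta\downarrow0$ gives $B(W^*_n,\pi;\mu)=o_p(1)$ and the second bullet. The main obstacle is the bound $\fE^2(W^*_n(\pi))=O_p(1/n)$: all of the non-imputation error flows through it, and it rests on the moment bookkeeping for the kernel quadratic form of the IPW comparison weight — precisely where $\mathbb E[\sqrt{\cK_a(X,X)}]<\infty$ and $\mathbb E[\cK_a(X,X)\pi_A^2(X)/\varphi_A^2(X)]<\infty$ are needed. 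The $C_0$-universal case carries the additional delicacy of controlling how much weight mass $W^*_n$ can leak outside a large compact set, which is again handled through that $O_p(1/n)$ bound.
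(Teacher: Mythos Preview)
Your proposal is correct and follows essentially the same route as the paper: decompose $\hat\psi_{W^*_n(\pi),\widehat Y}-\SAPE(\pi)$ into the imputation error $\fracn\sumi(1-\Delta_i)W^*_i(\widehat Y_i-T_i)$, controlled by Lemma~\ref{thm:Preservaton_of_rates}, plus the fully observed error $\hat\psi_{W^*_n(\pi),T}-\SAPE(\pi)$. The only difference is that the paper simply invokes \citet{kallus2017} Theorem~3 for the latter, whereas you unpack that result's proof (the $\fE^2(W^*_n)=O_p(1/n)$ bound via comparison to normalized IPW weights and the bias treatment in the two cases); your sketch of that argument is sound.
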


\begin{cor}
Suppose the assumptions of Theorem \ref{thm:VanillaRates} hold, then
\begin{itemize}
    \item If $\|\mu_a\|_{\cK_a}<\infty$ and $\|\widehat{\mu}_{na}\|_{\cK_a}=O_p(1)$ for all $a \in [m]$, then  \\ \textcolor{white}{.} \hfill $\hat{\psi}_{W^*_n(\pi),\widehat{\mu}_n,\widehat{Y}} - \SAPE(\pi) = O_p(r_n+1/\sqrt{n})$,
    \item If $\cK_a$ is $C_0$-universal for all $a \in [m]$, then \hfill $\hat{\psi}_{W^*_n(\pi),\widehat{\mu}_n,\widehat{Y}} - \SAPE(\pi) = o_p(1)$.
\end{itemize} \label{cor:DRRates}
\end{cor}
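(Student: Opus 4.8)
The plan is to reduce the doubly robust estimator's error to the error of the weighted estimator already controlled in Theorem~\ref{thm:VanillaRates}, plus a single extra bias term driven by the estimation error $\widehat{\mu}_n-\mu$, which the balanced weights keep small. Writing $\widehat{Y}_i-\widehat{\mu}_{nA_i}(X_i)=(\widehat{Y}_i-T_i)+\eps_i+(\mu_{A_i}(X_i)-\widehat{\mu}_{nA_i}(X_i))$ with $\eps_i=T_i-\mu_{A_i}(X_i)$, and using linearity of $B(W,\pi;\cdot)$ in its last argument together with $\SAPE(\pi)=\fracn\sumi\suma\pi_a(X_i)\mu_a(X_i)$, one obtains the identity
\begin{align*}
\widehat{\psi}^{\text{DR}}_{W^*_n,\widehat{\mu}_n}(\pi)-\SAPE(\pi)= -B(W^*_n,\pi;\widehat{\mu}_n-\mu)+\fracn\sumi W^*_{n,i}(\widehat{Y}_i-T_i)+\fracn\sumi W^*_{n,i}\eps_i,
\end{align*}
and the last two terms equal $(\widehat{\psi}_{W^*_n,\widehat{Y}}-\SAPE(\pi))-B(W^*_n,\pi;\mu)$, so only $B(W^*_n,\pi;\widehat{\mu}_n-\mu)$ is genuinely new.

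For the recycled pieces: since $\widehat{Y}_i=Y_i=T_i$ whenever $\Delta_i=1$, the middle term equals $\fracn\sumi(1-\Delta_i)W^*_{n,i}(\widehat{Y}_i-T_i)=O_p(r_n)$ by Lemma~\ref{thm:Preservaton_of_rates}. For the noise term, Assumption~\ref{as:exchangeability} gives $E[\eps_i\mid X_i,A_i]=0$ and $W^*_n$ is $\sigma(X_{1:n},A_{1:n})$-measurable, so its conditional variance is at most $n^{-2}\norm{W^*_n}^2$ times the a.s.\ bound on $\Var(T\mid X)$; comparing $W^*_n$ against a normalized inverse-propensity weight vector $\widetilde{W}\in\cW$ yields $\fE^2(W^*_n,\pi;\|\cdot\|,\Lambda_n)\leq\fE^2(\widetilde{W},\pi;\|\cdot\|,\Lambda_n)=O_p(1/n)$ under Assumption~\ref{as:weak_overlap} and $E[\cK_a(X,X)\pi_A^2(X)/\varphi_A^2(X)]<\infty$, which with $\underline{\kappa}I\preceq\Lambda_n$ forces $\norm{W^*_n}^2=O_p(n)$, so (by conditional Chebyshev) the noise term is $O_p(1/\sqrt{n})$. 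These are exactly the estimates used to prove Theorem~\ref{thm:VanillaRates}.

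It remains to bound $B(W^*_n,\pi;\widehat{\mu}_n-\mu)$. By linearity of $B$ and the definition of $\fE$, for any $h$ in the RKHS product, $|B(W^*_n,\pi;h)|\leq\|h\|_{p,\cK_{1:m},\gamma_{n,1:m}}\,\fE(W^*_n,\pi;\|\cdot\|,\Lambda_n)$, and from the previous step $\fE(W^*_n,\pi;\|\cdot\|,\Lambda_n)=O_p(1/\sqrt{n})$. Under the first bullet's hypotheses, $\|\mu_a\|_{\cK_a}<\infty$ and $\|\widehat{\mu}_{na}\|_{\cK_a}=O_p(1)$ give $\|\widehat{\mu}_{na}-\mu_a\|_{\cK_a}=O_p(1)$, and $\underline{\gamma}\leq\gamma_{n,a}$ makes $\|\widehat{\mu}_n-\mu\|_{p,\cK_{1:m},\gamma_{n,1:m}}=O_p(1)$, so $|B(W^*_n,\pi;\widehat{\mu}_n-\mu)|=O_p(1/\sqrt{n})$ and the three pieces combine to $O_p(r_n+1/\sqrt{n})$. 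Under the second bullet, $\widehat{\mu}_n-\mu$ need not have a uniformly bounded RKHS norm, so I would instead mirror the $C_0$-universal argument of Theorem~\ref{thm:VanillaRates}: decompose each coordinate of $\widehat{\mu}_n-\mu$ into a part approximated to within $\eta$ in supremum norm on a compact set carrying all but $\eta$ of the $X$-mass by an RKHS element (whose contribution to $B$ is at most its finite RKHS norm times $\fE(W^*_n)\to0$) plus a remainder bounded in supremum norm using $W^*_n\in\cW$ summing to $n$ and $\pi(X_i)\in\Theta^m$; letting $\eta\downarrow0$ gives $B(W^*_n,\pi;\widehat{\mu}_n-\mu)=o_p(1)$, hence the whole expression is $o_p(1)$.

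I expect the second bullet to be the delicate step: the function whose worst-case bias must be controlled is the \emph{random} object $\widehat{\mu}_n-\mu$, which may fall outside every fixed RKHS ball, so the $C_0$-universal approximation must be made to hold uniformly over, or at least robustly to, the realization of $\widehat{\mu}_n$. This is where some care, and perhaps a mild regularity condition on $\widehat{\mu}_n$, is needed; the rest is bookkeeping on top of Theorem~\ref{thm:VanillaRates} and Lemma~\ref{thm:Preservaton_of_rates}.
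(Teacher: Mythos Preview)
Your proposal is correct and matches the paper's approach. The paper's proof is a one-line reference: it invokes Lemma~\ref{thm:Preservaton_of_rates} together with Corollary~4 of \citet{kallus2017}, combined as in the proof of Theorem~\ref{thm:VanillaRates}. Your decomposition
\[
\widehat{\psi}^{\text{DR}}_{W^*_n,\widehat{\mu}_n}(\pi)-\SAPE(\pi)= -B(W^*_n,\pi;\widehat{\mu}_n-\mu)+\fracn\sumi W^*_{n,i}(\widehat{Y}_i-T_i)+\fracn\sumi W^*_{n,i}\eps_i
\]
is exactly the content of that citation unpacked: the middle term is the imputation error handled by Lemma~\ref{thm:Preservaton_of_rates}, and the remaining two terms constitute the uncensored DR error treated in \citet{kallus2017}, Corollary~4, via the same $\fE(W^*_n)=O_p(1/\sqrt{n})$ bound you derive. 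Your caution on the second bullet---that the $C_0$-universal approximation must be made robust to the randomness of $\widehat{\mu}_n$---is well-placed; the paper does not spell this out either and simply defers to the cited corollary, so any mild regularity needed there (e.g.\ uniform boundedness or tightness of $\widehat{\mu}_n$) is implicit in both arguments.
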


\begin{proof}[Proof of Theorem \ref{thm:VanillaRates}]
Define $$\hat{\psi}_{W^*_n(\pi),T} = \fracn \sumi W^*_n(\pi)T_i, \quad\text{and}\quad \hat{\psi}_{W^*_n(\pi),\hat{Y}} = \fracn \sumi W^*_n(\pi)\hat{Y_i},$$ as the estimators which use the true and estimated imputed failure time vectors respectively. We can then decompose $\hat{\psi}_{W^*_n(\pi),\widehat{Y}}$ as follows:
\begin{align*}
    \hat{\psi}_{W^*_n(\pi),\widehat{Y}} - \SAPE(\pi) 
    &= \fracn\sumi (1-\Delta_i) W_i(\widehat{Y}_i-T_i) + \hat{\psi}_{W^*_n(\pi),T} - \SAPE(\pi).
\end{align*}
By Lemma \ref{thm:Preservaton_of_rates} we have that $$\fracn\sumi (1-\Delta_i) W_i(\widehat{Y}_i-T_i) = O_p(r_n).$$
If $\|\mu_a\|_{\cK_a}<\infty$ for all $a \in [m]$, then by \citet{kallus2017} Theorem 3, we have that
\begin{align*}
    \hat{\psi}_{W^*_n(\pi),T} - \SAPE(\pi) &= O_p(1/\sqrt{n}), \\
    \text{and thus } \quad \hat{\psi}_{W^*_n(\pi),\widehat{Y}} - \SAPE(\pi) &= O_p(r_n+1/\sqrt{n}). \qquad
\end{align*}

If $\cK_a$ is $C_0$-universal for all $a \in [m]$, then by \citet{kallus2017} Theorem 3, we have that
\begin{align*}
\hat{\psi}_{W^*_n(\pi),T} - \SAPE(\pi) &= o_p(1), \\
\text{and thus } \quad \hat{\psi}_{W^*_n(\pi),\widehat{Y}} - \SAPE(\pi) &= o_p(1). \tag*{\qedhere}
\end{align*}
\end{proof}
\noindent The proof of Corollary \ref{cor:DRRates} follows immediately from Lemma \ref{thm:Preservaton_of_rates}, as well as Corollary 4 of \citet{kallus2017} in a manner similar to the proof of Theorem \ref{thm:VanillaRates}.

Next, we establish the consistency and learning rates of the balanced policy learner uniformly over policy classes. For a given policy class $\Pi$, we define the sample regret as
$$\widehat{R}_{\Pi}(\hat{\pi}) = \max_{\pi\in\Pi}\SAPE(\pi) - \SAPE(\hat{\pi}),$$
and the population regret as
$$R_{\Pi}(\hat{\pi}) = \max_{\pi\in\Pi}\PAPE(\pi) - \PAPE(\hat{\pi}).$$
Convergence of these regret quantities require that the best-in-class policy is learnable. We quantify learnability using Rademacher complexity, but the results could be extended to VC dimension. Let us define
$$\widehat{\mathfrak{R}}_n(\cF) = \frac{1}{2^n} \sum_{\rho_i\in\{-1,+1\}^n} \rho_if(X_i), \qquad \mathfrak{R}_n(\cF)=\mathbb{E}[\widehat{\mathfrak{R}}_n(\cF)].$$
If $\cF\subseteq [\cX\rightarrow\mathbb{R}^m]$, let $\cF_a = \{(f(\cdot))_a:f\in\cF\}$ and set $\mathfrak{R}_n(\cF) = \suma \mathfrak{R}_n(\cF_a)$ and $\widehat{\mathfrak{R}}_n(\cF) = \suma \widehat{\mathfrak{R}}_n(\cF_a)$. We also must use a stronger version of the overlap assumption.
\begin{assumption}[Strong overlap]
$\exists \alpha\geq1$ such that $\mathbb{P}(\varphi_a(X)>1/\alpha) = 1 \;\, \forall a\in[m]$. \label{as:strong_overlap}
\end{assumption}

\begin{thm}
Fix $\Pi\subseteq[\cX\rightarrow\Theta^m]$ and let $W_n^*(\pi) = W_n^*(\pi;\|f\|_{p,\cK_{1:m},\gamma_{n,1:m}},\Lambda_n)$ with $0\prec\underline{\kappa}I\preceq\Lambda_n\preceq\overline{\kappa}I$, $0<\underline{\gamma}\leq \gamma_{n,a}\leq \overline{\gamma}\;\, \forall a\in[m]$, $n$, and $\pi\in\Pi$. Suppose assumptions \ref{as:exchangeability}-\ref{as:convergence_Y} and \ref{as:strong_overlap} hold, $|\epsilon_i|\leq B $ a.s. bounded, and $\sqrt{\cK_a(X,X)}\leq \Gamma\; \forall a\in[m]$ for $\Gamma\geq 1$. If $\widehat{\pi}^\text{balanced}_n$ is as in \eqref{eq:learner}, then the following results hold:
\begin{itemize}
    \item If $\|\mu_a\|_{\cK_a}<\infty$ for all $a \in [m]$, then \hfill $R_\Pi(\widehat{\pi}^\text{balanced}_n) = O_p(\mathfrak{R}_n(\Pi)+r_n+1/\sqrt{n})$,
    \item If $\cK_a$ is $C_0$-universal for all $a \in [m]$, then \hfill $R_\Pi(\widehat{\pi}^\text{balanced}_n) = o_p(1)$.
\end{itemize} 
If $\widehat{\pi}^\text{balanced-DR}_n$ is as in \eqref{eq:learner_DR}, then the following results hold:
\begin{itemize}
    \item If $\|\mu_a\|_{\cK_a}<\infty$ and $\|\widehat{\mu}_{na}\|_{\cK_a}=O_p(1)$ for all $a \in [m]$, then \\ \textcolor{white}{.} \hfill $R_\Pi(\widehat{\pi}^\text{balanced}_n) = O_p(\mathfrak{R}_n(\Pi)+r_n+1/\sqrt{n})$,
    \item If $\cK_a$ is $C_0$-universal for all $a \in [m]$, then \hfill $R_\Pi(\widehat{\pi}^\text{balanced-DR}_n) = o_p(1)$.
\end{itemize} \label{thm:Learning_rate}
\end{thm}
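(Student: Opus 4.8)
The plan is to reduce both regret bounds to a uniform-in-$\pi$ control of the policy-evaluation error and then split that error — exactly as in the proof of Theorem~\ref{thm:VanillaRates} — into an imputation piece handled by Lemma~\ref{thm:Preservaton_of_rates} and a fully-observed piece handled by the learning-rate theory of \citet{kallus2017}. First I would record the standard learning-to-evaluation reduction: writing $\hat\pi=\hat\pi^\text{balanced}_n$ and $\pi^*\in\argmax_{\pi\in\Pi}\PAPE(\pi)$, inserting and subtracting $\hat\psi_{W^*_n(\cdot),\widehat Y}$ at $\pi^*$ and at $\hat\pi$, and using that $\hat\pi$ maximizes $\hat\psi_{W^*_n(\cdot),\widehat Y}$ over $\Pi$ so the cross term is nonpositive, gives
\[
R_\Pi(\hat\pi^\text{balanced}_n)\;\le\;2\sup_{\pi\in\Pi}\bigl|\hat\psi_{W^*_n(\pi),\widehat Y}-\PAPE(\pi)\bigr|,
\]
with the same argument (replacing $\PAPE$ by $\SAPE$) bounding the sample regret $\widehat R_\Pi(\hat\pi^\text{balanced}_n)$.

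The core step is the decomposition
\[
\hat\psi_{W^*_n(\pi),\widehat Y}-\PAPE(\pi)
=\underbrace{\fracn\sumi(1-\Delta_i)\,W^*_{n,i}(\pi)\bigl(\widehat Y_i-T_i\bigr)}_{(\mathrm{I})}
+\underbrace{\bigl(\hat\psi_{W^*_n(\pi),T}-\PAPE(\pi)\bigr)}_{(\mathrm{II})},
\]
using $\widehat Y_i-T_i=0$ on $\{\Delta_i=1\}$. For $(\mathrm{I})$ I would prove a uniform-in-$\pi$ strengthening of Lemma~\ref{thm:Preservaton_of_rates}. The weights enter the proof of that lemma only through their behaviour under Assumption~\ref{as:strong_overlap}; since $W^*_n(\pi)$ depends on $(X_{1:n},A_{1:n},\pi)$ alone, $\mathfrak{E}^2\bigl(W^*_n(\pi),\pi\bigr)\le\mathfrak{E}^2\bigl(W^{\text{IPW}}(\pi),\pi\bigr)$, the IPW weights are bounded by $\alpha$, and $\sqrt{\cK_a(X,X)}\le\Gamma$, one obtains $\sup_{\pi\in\Pi}\tfrac1n\|W^*_n(\pi)\|_2^2=O_p(1)$ — the same uniform control used in the proof of \citet{kallus2017}'s learning-rate theorem. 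With Assumptions~\ref{as:leave_one_out} and~\ref{as:convergence_Y}, which are $\pi$-free, the argument behind Lemma~\ref{thm:Preservaton_of_rates} then delivers $\sup_{\pi\in\Pi}|(\mathrm{I})|=O_p(r_n)$. For $(\mathrm{II})$, the estimator $\hat\psi_{W^*_n(\pi),T}=\fracn\sumi W^*_{n,i}(\pi)T_i$ is the fully-observed balanced-policy evaluator, so under our hypotheses ($\Var(T\mid X)$ bounded, $|\epsilon_i|\le B$, $\sqrt{\cK_a(X,X)}\le\Gamma$, strong overlap, bounded $\mathfrak{R}_n(\Pi)$) the learning-rate theorem of \citet{kallus2017} applies and yields $\sup_{\pi\in\Pi}|(\mathrm{II})|=O_p\bigl(\mathfrak{R}_n(\Pi)+1/\sqrt n\bigr)$ when $\|\mu_a\|_{\cK_a}<\infty$ for all $a$, and $\sup_{\pi\in\Pi}|(\mathrm{II})|=o_p(1)$ when every $\cK_a$ is $C_0$-universal. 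Combining the three displays gives the two stated bounds for $\hat\pi^\text{balanced}_n$.

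For $\hat\pi^\text{balanced-DR}_n$ the argument is identical after the analogous decomposition
\[
\hat\psi^\text{DR}_{W^*_n(\pi),\widehat\mu_n}(\pi)-\PAPE(\pi)
=\fracn\sumi(1-\Delta_i)\,W^*_{n,i}(\pi)\bigl(\widehat Y_i-T_i\bigr)
+\Bigl(\hat\psi^\text{DR}_{W^*_n(\pi),\widehat\mu_n,T}(\pi)-\PAPE(\pi)\Bigr),
\]
in which the regression terms cancel in the difference and $\hat\psi^\text{DR}_{W^*_n(\pi),\widehat\mu_n,T}$ denotes the doubly robust estimator formed from the true $T$; the first term is again $O_p(r_n)$ uniformly by the strengthened Lemma~\ref{thm:Preservaton_of_rates}, while the second is handled by the doubly robust learning-rate result of \citet{kallus2017} under $\|\mu_a\|_{\cK_a}<\infty$ and $\|\widehat\mu_{na}\|_{\cK_a}=O_p(1)$ (respectively under $C_0$-universality of the $\cK_a$). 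I expect the only real obstacle to be the uniform strengthening of Lemma~\ref{thm:Preservaton_of_rates}: one must confirm that the weight-norm control $\sup_{\pi\in\Pi}\tfrac1n\|W^*_n(\pi)\|_2^2=O_p(1)$ genuinely holds across the whole policy class via the IPW comparison under strong overlap, and that the leave-one-out and consistency steps in the appendix proof of Lemma~\ref{thm:Preservaton_of_rates} invoke nothing $\pi$-dependent beyond that bound. The remainder is bookkeeping that transplants \citet{kallus2017}'s learning-rate and doubly robust learning-rate theorems onto the estimators built from the true $T$.
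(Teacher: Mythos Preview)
Your proposal is correct and follows essentially the same route as the paper: the paper's proof is a one-line appeal to Assumption~\ref{as:convergence_Y}, Lemma~\ref{thm:Preservaton_of_rates}, and Corollary~7 of \citet{kallus2017}, which is precisely your decomposition into the imputation term $(\mathrm{I})$ handled by Lemma~\ref{thm:Preservaton_of_rates} and the fully-observed term $(\mathrm{II})$ handled by Kallus's uniform learning-rate result. You supply considerably more detail than the paper does---in particular the explicit regret-to-uniform-evaluation reduction and the observation that the uniform-in-$\pi$ weight-norm bound is what makes Lemma~\ref{thm:Preservaton_of_rates} extend over the policy class---but the skeleton is identical.
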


The proof of Theorem \ref{thm:Learning_rate} follows directly from Assumption \ref{as:convergence_Y} and Lemma \ref{thm:Preservaton_of_rates}, as well as Corollary 7 of \citet{kallus2017}. All the same results hold when replacing $\mathfrak{R}_n(\Pi)$ with $\widehat{\mathfrak{R}}_n(\Pi)$ and/or replacing $R_{\Pi}$ with $\widehat{R}_{\Pi}$.

\section{Simulation Studies}
\label{sec:simulations}

We have conducted simulation studies to assess the performance of the proposed method in comparison with existing alternatives. The models we compare are the simple weighted and doubly robust versions of the balanced policy approach with the imputed failure time vector defined in equation \eqref{eq:hatY}, a normalized clipped version of the approach which combines IPW with IPCW as defined in equation \eqref{eq:IPW_IPCW}, and a normalized clipped version of the IPW approach combined with the imputed vector from \eqref{eq:hatY}. In all settings, both the optimal and estimated policies are members of the reduced policy class $\Pi_\text{logit}$ defined in \eqref{eq:pi_logit}. 
The kernel used to define the worst case bias in the balanced policy approach is the Mahalanobis RBF kernel in equation \eqref{eq:mRBF}, where the hyperparameters are chosen by the marginal likelihood method using Gaussian process regression. The same kernel is used for all of the treatments, $\|\cdot\|_{K_a}=\|\cdot\|_{K}$ for all $a\in[m]$. For each setting, the conditional expected survival times, $\hat{Y}$, are estimated using RIST with the tuning parameter settings suggested in \citet{zhu2012}. Propensity scores are estimated using a correctly specified Gaussian process classifier. Due to different censoring mechanisms, the weights used for IPCW are estimated differently for each setting. For the first setting the censoring weights are estimated using RIST with the recommended tuning parameter settings. For the second setting the censoring weights are estimated using the correctly specified accelerated failure time (AFT) model. A testing dataset with size 10000 is used to approximate the population regret $\left(\PAPE(\pi^*)- \PAPE(\hat{\pi})\right)$. For each setting the log survival time is used to estimate the optimal policy on simulated datasets of size $N=500, 1000$, and $2000$. Each simulation is repeated 100 times.

\subsection{Simulation Settings}

Our first setting is a slight modification of the setting found in \citet{kallus2017}. To begin, we sample the covariate vector from a multivariate normal distribution $X \sim \cN ({\mathbf 0},\Sigma_d)$, where $d=10$ and $\Sigma_d$ is compound symmetric covariance with diagonal elements of 1 and off diagonal elements of 0.2. The treatments are assigned with probabilities which depend on the first two elements of of $X$. Specifically $A\,|\, X\sim$ Multinoulli$(p_1(X),\dots,p_5(X))$ where $p_a(X)\propto N((X_1,X_2) - \overline{X}_a,I_{2\times 2})$ for $a\in[m]$, 
$\overline{X}_1 = (0,0)$, $\overline{X}_2 = (1,0)$, $\overline{X}_3 = (0,1)$, $\overline{X}_4 = (-1,0)$, and $\overline{X}_5 = (0,-1)$. The unrestricted failure times are drawn from a log-normal distribution where $\log(\tT)\,|\, A\sim N(\mu_A(X),\sigma^2=1)$ where $\mu_a(x) = \exp(1)-\exp(1-1/\|x-\chi_a\|_2)$, and $\chi_a = (Re,Im)(e^{-i2\pi a/5}/\sqrt{2})$ for $a \in [5]$, $\epsilon_i\sim N(0,1)$, and $(Re,Im)$ are the real and imaginary components respectively. The observed failure time $T = \min(\tau,\tT)$, where $\tau=3.5$. This mean outcome process results in an optimal treatment policy consisting of 5 equal sized wedges arranged radially about the origin. The censoring times are drawn from a log-normal distribution where $\log(C)\,|\, A\sim N(\mu_C(X),\sigma^2=2)$ where $\mu_C(x) = (5/2)-(1/2)\exp(1-1/\|X_1+X_2\|_2))$. The censoring rate is approximately 45\%.

In the second setting, the failure times are drawn from log-normal distributions where the treatment effect has both prognostic and prescriptive elements. The 10-dimensional covariate vector, $X$, is drawn from a uniform$(-1,1)$ distribution with a compound symmetric covariance structure by first drawing from a multivariate normal distribution  $\cN ({\mathbf 0},\Sigma_{d})$ where $\Sigma_{d}$ is the same as in setting 1. The CDF transformation was applied to the normally distributed covariates which where then scaled to be within -1 and 1. The treatments are assigned with probabilities which depend on the first three elements of of $X$, where $A\,|\, X\sim$ Multinoulli$(p_1(X),\dots,p_3(X))$ where $p_a(X)\propto N((X_1,X_2,X_3) - \overline{X}_a,I_{3\times 3})$ for $a\in[3]$, 
$\overline{X}_1 = (-0.5 , -0.5 ,  0.4)$, $\overline{X}_2 = ( 0, 0, -0.75)$, $\overline{X}_3 = (0.5, 0.5, 0.4)$. The unrestricted failure times are drawn from a log-normal distribution where $\log(\tT) = 0.2-0.6X_1+0.2X_2+0.4X_3+ \mu_a + \epsilon$, where $\mu_1 = 0.2X_1 - 0.3X_2$, $\mu_2 =  0.1 + 0.1X_3$,
$\mu_3 = -0.1 - 0.2X_1 + 0.4X_2 - 0.2X_3$, and $\epsilon\sim N(0,\sigma^2=1)$. The failure times $T = \min(\tau,\tT)$, where $\tau=1.5$. The censoring times are drawn from a log-normal distribution which depends on the assigned treatment, where $\log(C) = 0.6 - 0.4X_1 + 0.3X_2 + 0.8X_3 + \nu_a + \epsilon$, with $\nu_{1} = - 0.1X_1 - 0.2X_2$, $\nu_{2} = 0.1X_2$, $\nu_{3} = -0.1 + 0.3X_2 - 0.4X_3$ and $\epsilon \sim N(0,\sigma^2=1)$. The censoring rate is about 45\%.

\subsection{Simulation Results}

The results for setting 1 can be found in Figure \ref{fig:BP_Orig}. This setting is designed to have mismatch between the unknown propensity function and the optimal policy such that the inverse propensity weights are large, resulting in high variability for the IPW with IPCW, and IPW with Imputation approaches. As expected, the increased variance of the IPW based estimators results in reduced performance. The more stable balanced policy with imputation approach performs better than the other approaches. In this setting the feature elimination method effectively eliminates noise variables, even for small sample sizes, which increases performance over the methods using a higher dimensional feature space. We also examined doubly robust approaches for each of the methods, but the mean model was not accurate enough to improve performance over the simple weighted methods. The results for the doubly robust approaches can be found in the supplemental material.

 \begin{figure}[htb]
 \begin{center}
 \includegraphics[width=4.5in]{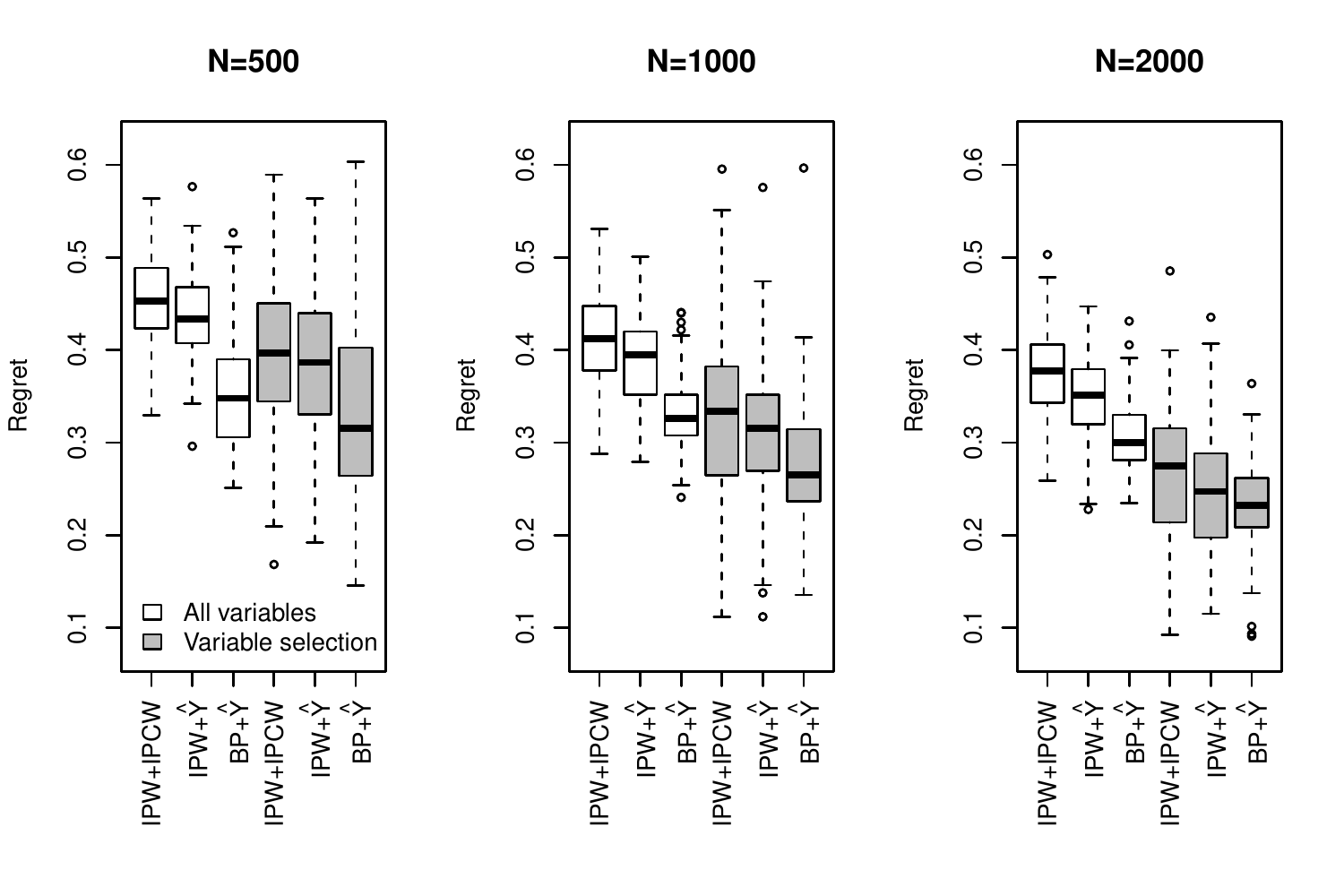}
 \end{center}
 \caption{Boxplots showing estimated population regret for setting 1. Smaller is better. IPW+IPCW: IPW combined with IPCW;  IPCW+$\hat{Y}$: IPW with imputed outcome vector; BP+$\hat{Y}$: Balanced policy with imputed outcome vector. Variable selection used riskRFE to remove variables. \label{fig:BP_Orig}}
 \end{figure}

The results for setting 2 can be found in Figure \ref{fig:AFT}. Again the higher variability of the IPW with IPCW, and IPW with Imputation approaches reduced performance when compared to the balanced policy approach. The doubly robust approaches for the balanced policy with imputation and IPW with Imputation increased performance compared to the simple weighted versions for small sample sizes. We examined the effect of feature elimination in this setting, but the method frequently eliminated important predictors, especially in the smaller sample sizes. The results using feature elimination can be found in the supplemental material.

 \begin{figure}[htb]
 \begin{center}
 \includegraphics[width=4.5in]{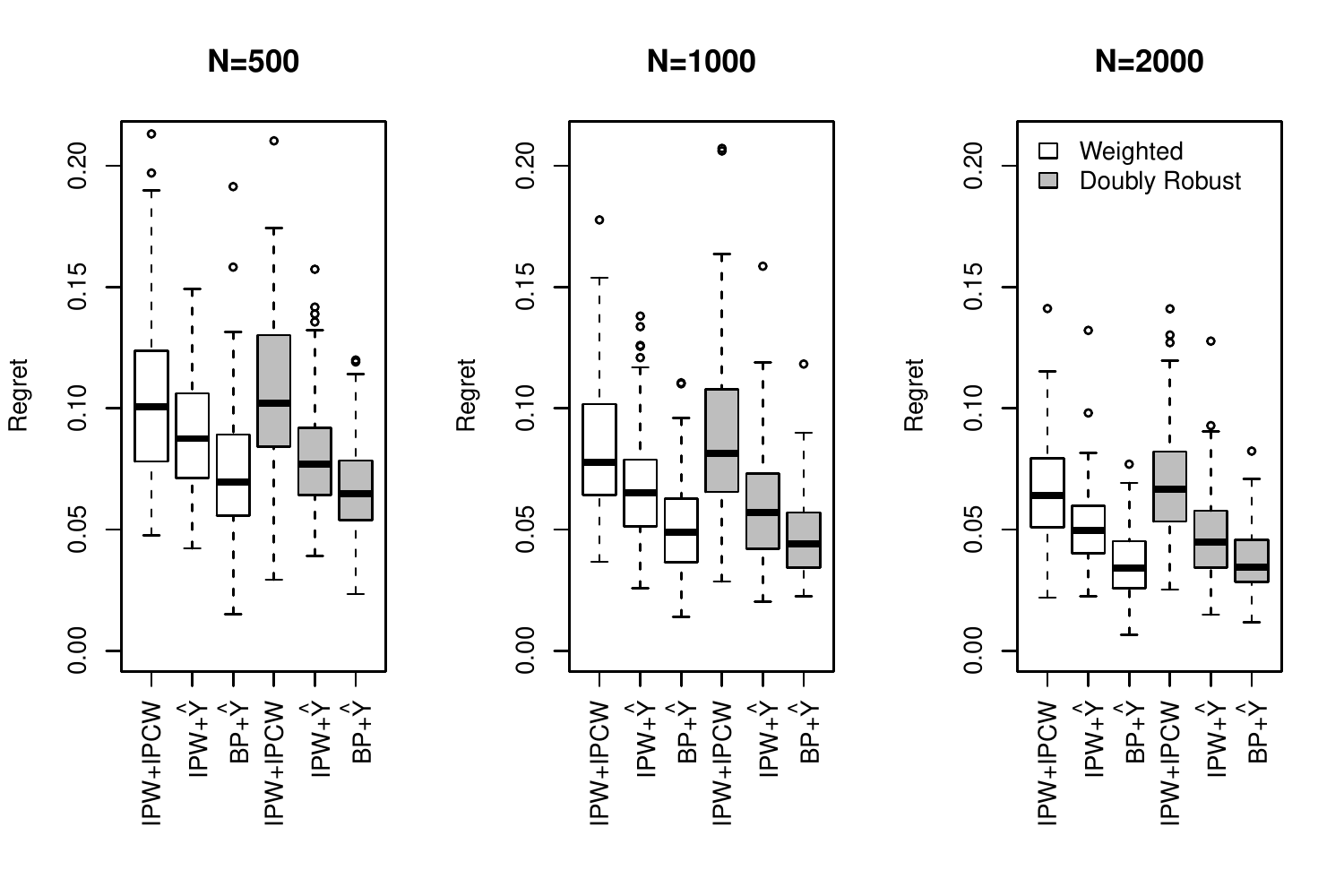}
 \end{center}
 \caption{Boxplots showing estimated population regret for setting 2. Smaller is better. IPW+IPCW: IPW combined with IPCW;  IPCW+$\hat{Y}$: IPW with imputed outcome vector; BP+$\hat{Y}$: Balanced policy with imputed outcome vector. Doubly robust versions used the correctly specified AFT model for $\hat\mu$. \label{fig:AFT}}
 \end{figure}

\section{Application to HIV Study}
\label{sec:application}

We apply the proposed method to analyze data from the UNC CFAR HIV Clinical Cohort (UCHCC). In the UCHCC study, patients were followed from initiation of an integrase strand transfer inhibitor (INSTI) in combination with at least two other antiretroviral (ARV) agents from at least one other ARV therapy class representing current standard of care for initial ART in high income clinical settings. Patients were followed until the first of ART modification or discontinuation, death, loss to follow-up,
or administrative censoring. The study data included 957 HIV-infected patients who were 72\% male, 62\% black, 29\% white, 9\% Hispanic or other races/ethnicities. The median age at treatment initiation was 44 years, 50\% were men who have sex with men (MSM) and 9\% had a history of injection drug use (IDU). At INSTI initiation (baseline), the median CD4 cell count was 514 cells/mm$^3$ (range 9 to 2970.0) and the median HIV RNA level was 1.8 log10 copies/mL (range 1.3 to 7.6). The median number of prior antiretroviral (ARV) compounds was 3 (range 0 to 17) and 29\% had no prior ARV experience. The INSTI regimen was chosen by providers and patients based on clinical indication and included in all cases one integrase strand transfer inhibitor (INSTI) and two or more nucleoside reverse-transcriptase inhibitors (NRTIs).  

The covariates include age, male (yes, no), race (black, white, or other), MSM (yes, no), IDU (yes, no), baseline CD4 count, baseline viral load (VL), an indicator if baseline RNA is undetectable, the number of prior ARV agents, and an indicator if the patient is treatment na\"ive. Categorical variables are transformed into dummy variables, resulting in an 11-dimensional covariate vector $X$. 

The treatment regimens of interest are Raltegravir (RAL), Elvitegravir (EVG), and Dolutegravir (DTG), each with 2+ NRTIs. The primary outcome of interest is time to the discontinuation of the initial INSTI regimen, which is defined as either a change in the INSTI agent or discontinuing ART for more than two weeks. The compounds under consideration received FDA approval at different times, so the maximum follow-up time is different for each regimen. The left plot in Figure \ref{fig:survivial_curves} shows the Kaplan-Meier curves for the UCHCC data by treatment over the unrestricted follow-up window. This plot indicates that there may be large treatment differences for the longer follow-up times, but we chose to limit our analysis to 2.5 years (913 days) of follow-up time. This time period was chosen because it is near the $98^\text{th}$ percentile of observed treatment discontinuations for the regimen with the least follow-up time, which ensures sufficient data coverage to support the analysis.

Among the 957 study subjects, 416 (43\%) were observed to discontinue treatment, and 319 (33\%) were censored due to loss to follow-up during the first 2.5 years. The 222 (23\%) patients still being followed after 2.5 years were administratively censored.

We applied balanced policy evaluation and learning to the UCHCC data using all available variables with the kernel defined in \eqref{eq:mRBF} and the policy class in \eqref{eq:pi_logit}. The estimated optimal treatment strategy identified by balanced policy evaluation and learning was to treat every patient with DTG and 2+ NRTIs. Each of the three INSTIs evaluated in these analyses have different barriers to resistance evolution, tolerability profiles and dosing frequency, likely affecting the durability of the regimen differentially across different patient groups. Additional analyses stratified by whether patients were ART naive at INSTI initiation or by age at INSTI initiation, supported the findings of the primary analyses, the details of which may be found in the supplemental material.

 \begin{figure}[htbp]
     \centering
     \includegraphics[width=.4\textwidth]{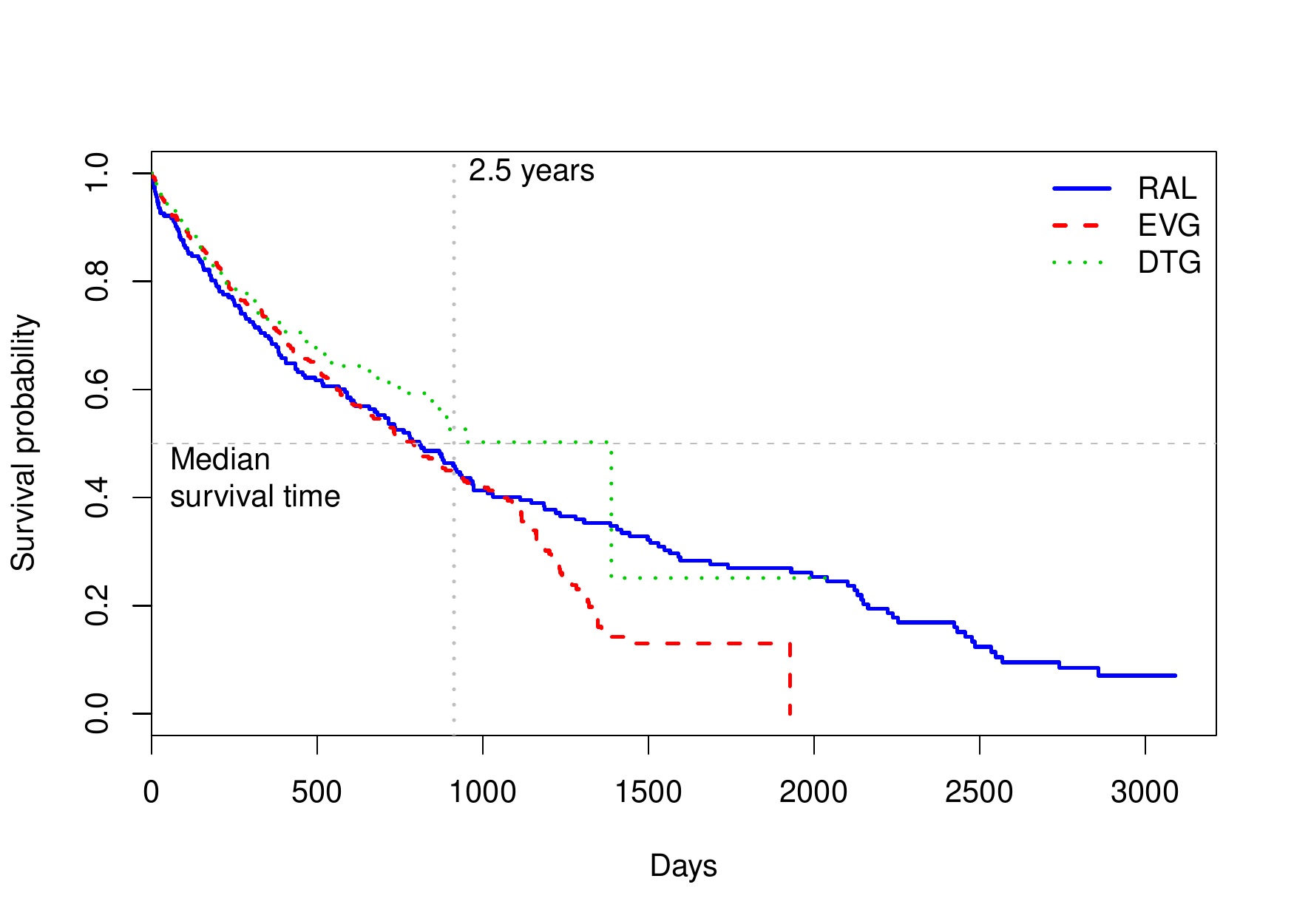}
     \includegraphics[width=.4\textwidth]{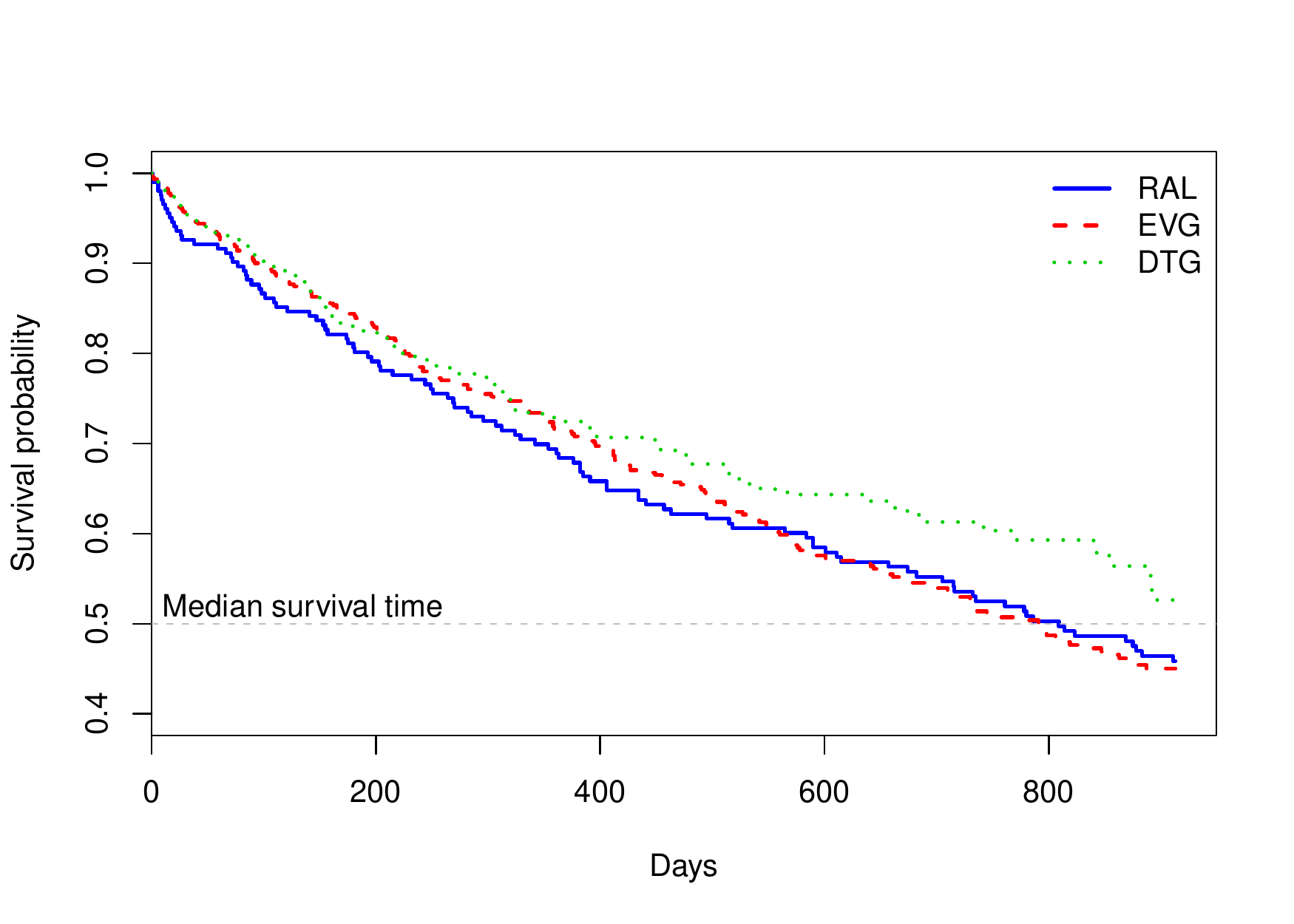} 
     \caption{Estimated marginal survival curves for each treatment. The left plot shows the entire follow-up period. The right plot shows the restricted follow-up period
     }
     \label{fig:survivial_curves}
 \end{figure}

In order to see how our method performs in comparison to existing methods on realistic data when meaningful subgroup effects are present, we modified the data to equalize baseline differences and highlight heterogeneity. Specifically, a constant scaling factor was applied to the observed times for each of the treatments so that the marginal treatment effects between any two treatments was negligible. Due to the effects of differential censoring rates between treatments, the average policy effect was estimated using the regression-based estimator \eqref{eq:reg} where the mean process $\hat{\mu}$ was estimated using RIST. 

For this analysis we used a reduced set of predictors which were identified by riskRFE as being most likely to have meaningful subgroup effects. The predictors were age, treatment na\"ive status, baseline CD4, baseline RNA, and undetectable baseline RNA indicator. The conditional expected survival times, $\widehat{Y}$, and the censoring probabilities used for IPCW were estimated using RIST with the tuning parameter settings suggested in \citet{zhu2012}. Propensity scores were estimated using a Gaussian process classifier. The balanced policy evaluation again used the kernel defined in \eqref{eq:mRBF} and the policy class in \eqref{eq:pi_logit}.

We compare the balanced policy with imputation approach to several other methods. The regression-based method as defined in equation \eqref{eq:reg} with $\hat{\mu}$ estimated using RIST, and two weighted methods as defined in equation \eqref{eq:vanilla_weighted}. The weighted methods are: IPW with IPCW using the weights defined in equation \eqref{eq:IPW_IPCW}, and IPW with imputation using $\hat{Y}$ and the weights defined in equation \eqref{eq:IPW}.

For comparison between methods, we estimate the optimal policy with a cross-validation type analysis. Specifically, the data are partitioned into ten roughly equal-sized parts. For each method under consideration, we estimate the optimal policy on nine parts of the data, and then compute several estimates of the SAPE using the remaining tenth part. By applying the above procedure, holding out a different part of the data each time, we were able to get out-of-sample predictions which should better represent expected optimal SAPE for future subjects. This approach was applied to 100 different partitions of the data and the results are presented in Table \ref{tab:Results_modified} and Figure \ref{fig:Results_modified}. 

From Table \ref{tab:Results_modified}, we observe that the proposed approach produces policies with larger mean estimated SAPE with smaller standard errors across all three evaluation criteria. The policies found by the regression-based estimator had high estimated SAPE with respect to its corresponding criteria, $\hat{\psi}^\text{Reg}(\hat{\pi})$, but low estimated SAPE with respect to the weight based criteria, $\hat{\psi}_{W^\text{IPW}}(\hat{\pi})$ and $\hat{\psi}_{W^\text{IPW,IPCW}}(\hat{\pi})$. The proposed method was able to find policies with high estimated SAPE for all criteria, indicating that the policies found were truly the best in terms of SAPE, as their improvement was insensitive to the choice of validation estimation method.


 \begin{table}[htbp]
     \centering
     \caption{Analysis of modified HIV data: Mean (sd) of estimated SAPE.}
     \begin{tabular}{lccc}   \hline \hline
     Method & $\hat{\psi}^\text{Reg}(\hat{\pi})$ & $\hat{\psi}_{W^\text{IPW}}(\hat{\pi})$ & $\hat{\psi}_{W^\text{IPW,IPCW}}(\hat{\pi})$ \\ \hline
     Reg & 667 (0.63)   & 608 (12.1) & 601 (12.7) \\
     IPW+IPCW & 658 (2.37)     & 630 (12.0) & 617 (14.1) \\
     IPW+$\hat{Y}$ & 660 (2.26)    & 626 (12.5) & 620 (15.7) \\
     BP+$\hat{Y}$ & 667 (0.43)     & 649 (6.1)  & 647 (6.9) \\ \hline
     \end{tabular} \\
     {\footnotesize Reg: Regression-based approach; IPW+IPCW: IPW combined with IPCW;  IPCW+$\hat{Y}$: IPW with} \\ \vspace*{-1mm}
     {\footnotesize imputed outcome vector; BP+$\hat{Y}$: Balanced policy with imputed outcome vector.}%
     \label{tab:Results_modified}%
 \end{table}%


 \begin{figure}[htbp]
     \centering
     \includegraphics[width=.6\textwidth]{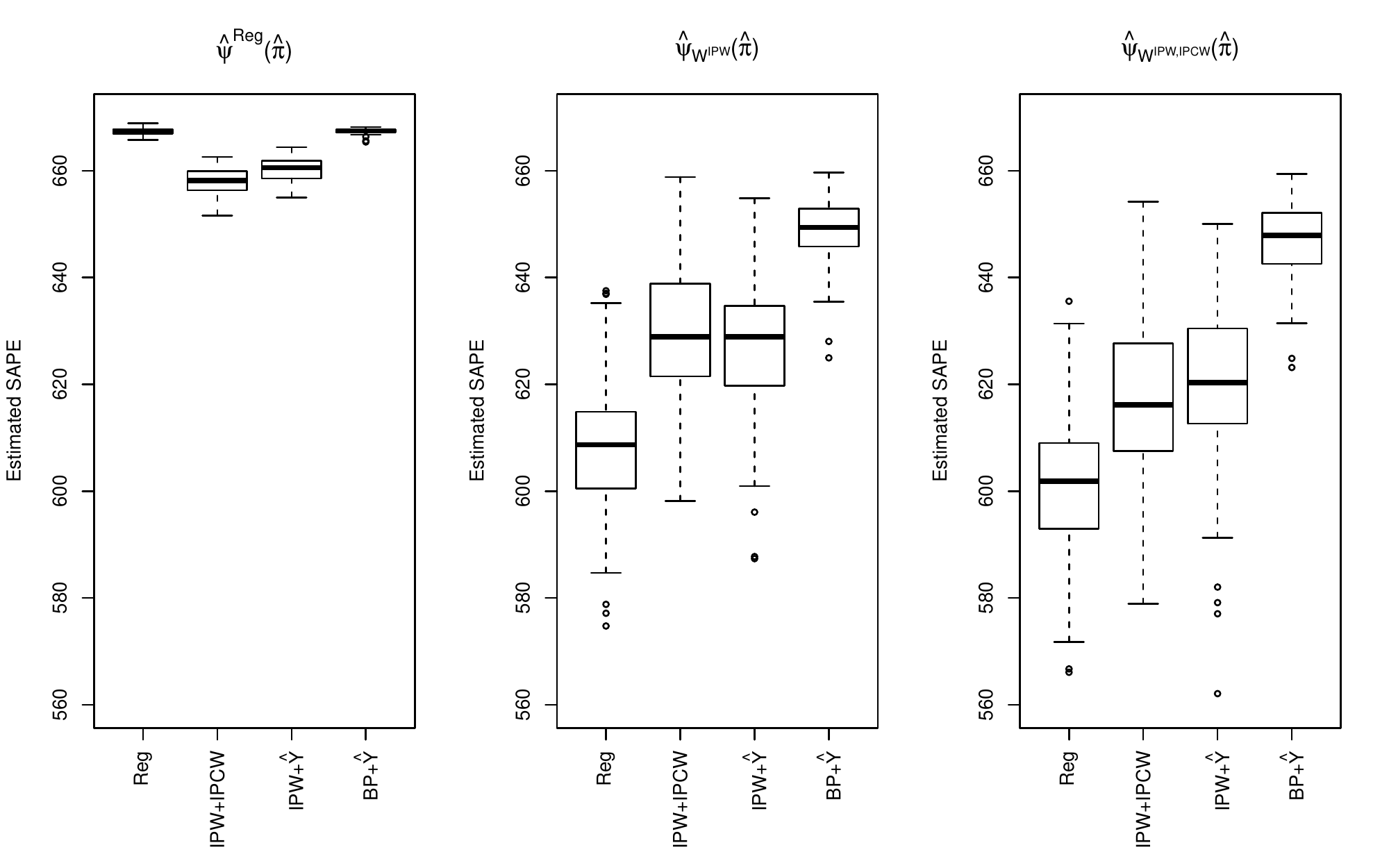}
     \caption{Boxplots showing the estimated SAPE for the modified HIV data. Larger is better. Reg: Regression-based approach; IPW+IPCW: IPW combined with IPCW;  IPCW+$\hat{Y}$: IPW with imputed outcome vector; BP+$\hat{Y}$: Balanced policy with imputed outcome vector.}
     \label{fig:Results_modified}
 \end{figure}%

 \begin{table}[htbp]
     \centering
     \caption{Analysis of modified HIV data: Mean (sd) of the percentage of subjects assigned to each treatment.}
     \begin{tabular}{lccc}   \hline \hline
     Method & RAL & EVG & DTG \\ \hline
     Reg & 42.1\% (1.62) & 51.0\% (1.75) &  6.9\% (0.77)  \\
     IPW+IPCW & 32.2\% (3.98) & 57.4\% (3.25) & 10.4\% (1.83) \\
     IPW+$\hat{Y}$ & 36.6\% (3.38) & 54.9\% (3.02) & 8.5\% (2.10) \\
     BP+$\hat{Y}$ & 48.1\% (1.17) & 49.4\% (0.94) & 2.5\% (0.73) \\ \hline
     \end{tabular} \\
     {\footnotesize Reg: Regression-based approach; IPW+IPCW: IPW combined with IPCW;  IPCW+$\hat{Y}$: IPW with} \\ \vspace*{-1mm}
     {\footnotesize imputed outcome vector; BP+$\hat{Y}$: Balanced policy with imputed outcome vector.}%
     \label{tab:Results_modified_pct}%
 \end{table}%

Table \ref{tab:Results_modified_pct} shows the mean and standard deviations of percentage of subjects that were assigned to each of the three treatment options for the different approaches over the 100 partitions of the data. The standard deviation of the treatment assignment percentages is smallest for the proposed method, indicating that it is less sensitive to small perturbations of the data when compared to the other approaches. The variability was largest for the inverse weighted methods which is likely a result of the higher variance of these estimators. The censoring rates were modest for RAL and EVG, but high for DTG. Estimating the average treatment effect in the presence of high censoring likely led to an unstable estimate and we believe that adjusting based on this estimate led to the unusually low numbers of subjects assigned to DTG.


\section{Discussion}
\label{sec:disc}

The proposed method appears to be more effective at finding optimal treatment policies, when compared to adapting previous methods to the right-censored setting. The use of unbiased weighted estimators is likely to result in high variance estimates of the SAPE which can complicate the process of finding the policy that is optimal for the population of interest. The proposed method uses a balanced policy evaluation and learning approach combined with imputation of the censoring times to reduce the variance of the estimator. Both of these approaches are likely to increase the bias, especially for small sample sizes. However, our results show that the reduction in variance offsets the increased bias and allows for more effective training of an optimal treatment policy.

The imputation approach that we propose can be used to extend any policy evaluation method to right censored data. In the simulation studies, we demonstrated that combining inverse propensity weighting with the imputation approach leads to better performance when compared with the method that uses both inverse propensity and inverse probability censoring weights. We showed that the imputation approach is compatible with any consistent estimate of the conditional survival or cumulative hazard function. We proved that the convergence rates for the conditional expected failure times is of the same order as the convergence rates for the conditional survival or cumulative hazard function estimator.

There are a number of ways in which the proposed method may be improved or extended. As currently formulated, the proposed method is limited to comparatively small datasets, both in terms of number of observations and the dimension of the covariate space. Like the original balanced policy evaluation method, the proposed method is more computationally intensive than existing approaches. There may be more efficient optimization procedures, such as alternating descent or boosting with a faster approximate method that could allow the proposed method to be applied to larger datasets. Additionally, when no parametric or semiparametric relationship can be assumed for failure time distribution, the convergence rate of the conditional expected survival times can be quite slow, especially when the number of covariates are large. We have discussed some potential solutions, but a more principled approach may be able to further increase the performance. It would also be of interest to extend our approach to dynamic treatment regimes where we wish to identify an optimal treatment sequence in a censored survival setting.

\newpage

\section{Appendix}

\begin{proof}[Proof of Lemma \ref{thm:Preservaton_of_rates}] Noting that
\begin{align*}
\fracn\sumi (1-\Delta_i) W_i(\widehat{Y}_i-T_i) 
&= \fracn\sumi (1-\Delta_i) W_i(\widehat{Y}_i-\tY_i) + \fracn\sumi (1-\Delta_i) W_i(\tY_i-T_i),
\end{align*}
we can bound each term separately. Focusing on the rate at which $\hat{Y} \rightarrow \tY$, consider the form of $\hat{Y}$ 
\begin{align*}
 \frac{\int_Y^\tau t\, d\widehat{F}(t\,|\, X,A) + \tau \widehat{S}(\tau\,|\, X,A)}{\widehat{S}(Y\,|\, X,A)} =  \frac{ - \int_Y^\tau t\, d\widehat{S}(t\,|\, X,A)}{\widehat{S}(Y\,|\, X,A)} + \frac{\tau \widehat{S}(\tau\,|\, X,A)}{\widehat{S}(Y\,|\, X,A)}.
\end{align*}
The first quantity depends on the pair $(\widehat{S}(t\,|\, X,A),\widehat{S}(Y\,|\, X,A))$ through the two maps $(A,B) \mapsto (A,1/B) \mapsto \int_y^\tau tdA/B.$
The derivative of the composition map is given by
\begin{align*}  
    (\alpha,\beta) \mapsto \left(\alpha,\frac{-\beta}{S(Y\,|\, X,A)^2}\right) \mapsto \frac{\int_y^\tau td\alpha(t)}{S(Y\,|\, X,A)} + \frac{\int_y^\tau t\,dS(Y\,|\, X,A)\beta(y)}{S^2(y\,|\, X,A)}.
\end{align*}
Similarly, the second quantity depends on the pair $(\widehat{S}(t\,|\, X,A),\widehat{S}(Y\,|\, X,A))$ through the two maps $(A,B) \mapsto (A,1/B) \mapsto \tau A/B.$
The derivative of the composition map is given by
\begin{align*}
    (\alpha,\beta) \mapsto \left(\alpha,\frac{-\beta}{S(Y\,|\, X,A)^2}\right) \mapsto \frac{\tau \alpha(\tau)}{S(Y\,|\, X,A)} + \frac{\tau S(y\,|\, X,A)\beta(y)}{S^2(Y\,|\, X,A)}.
\end{align*}
By the Hadamard differentiability of the maps, Theorems 2.8 and 12.1 of \citet{kosorok2008}, and Assumption \ref{as:convergence_Y} we have that $\fracn\sumi \widehat{Y}_i-\tY_i = O_p(r_n)$. 

\noindent Returning to $\fracn\sumi (1-\Delta_i) W_i(\widehat{Y}_i-\tY_i)$, by Cauchy-Schwarz we have
\begin{align*}
    \fracn\sumi (1-\Delta_i) W_i(\widehat{Y}_i-\tY_i) 
    &\leq \left(\fracn\|W\|_2^2\right)^{\frac{1}{2}}\left(\fracn\sumi (\widehat{Y}_i-\tY_i)^2\right)^{\frac{1}{2}}.
\end{align*}
The proof of \cite{kallus2017} Theorem 3 implies that $\fracn\|W\|_2^2=O_p(1)$, and thus

$$\left|\fracn \sumi (1-\Delta_i)W_i(\widehat{Y}_i-\tY_i)\right| \leq \left(\fracn\|W\|^2_2\right)^{\frac{1}{2}}\left( \fracn \sumi (1-\Delta_i)(\widehat{Y}_i-\tY_i)^2 \right)^\frac{1}{2}.$$

\noindent By assumption \ref{as:leave_one_out} and the exchangeability of the data, we have
\begin{align*}
E\left[ \fracn \sumi (1-\Delta_i)(\widehat{Y}_i-\tY_i)^2 \right] 
&\leq E\left[ (1-\Delta_n)(\widehat{Y}^{(n)}_n-\tY_n)^2 \right] .
\end{align*}
Now,
\begin{align*}
    &E\left[ (1-\Delta_i)(\widehat{Y}^{(n)}_n-\tY_n)^2 \right] \\
    &\leq \tau E\left[ \left|\widehat{E}_{n-1}(T\,|\, X_n,A_n,T>Y_n)-E(T\,|\, X_n,A_n,T>Y_n)\right| \wedge \tau \right]  
\end{align*}
(since both $\widehat{E}_{n-1}[T]$ and $E[T]$ are constrained to be $\in(0,\tau]$)
\begin{align*}
    &= \tau E\left[ \left|\widehat{E}_{n-1}(T\,|\, X,A,T>Y)-E(T\,|\, X,A,T>Y)\right| \wedge \tau \right] \\ 
    &= \tau E\left[\left|\frac{-2\tau\widehat{S}_{n-1}(\tau\,|\, X,A) + Y\widehat{S}_{n-1}(Y\,|\, X,A) + \int_Y^\tau \widehat{S}_{n-1}(t\,|\, X,A)\,dt}{\widehat{S}_{n-1}(Y\,|\, X,A)} \right. \right.\tag{$*1$} \\
    &\pushright{ + \left. \left. \frac{2\tau S(\tau\,|\, X,A) - Y S(Y\,|\, X,A) - \int_Y^\tau  S(t\,|\, X,A)\,dt}{S(Y\,|\, X,A)}\right| \wedge \tau \right]. } 
\end{align*}
By adding and subtracting $$\frac{2\tau S(\tau\,|\, X,A) - Y S(Y\,|\, X,A) - \int_Y^\tau  S(t\,|\, X,A)\,dt}{\widehat{S}_{n-1}(Y\,|\, X,A)},$$ combining terms and simplifying, we have
\begin{align}
    (*1)
    &\leq 8\tau^2 E\left[ \left|\frac{\sup_{t\leq\tau}\left(\widehat{S}_{n-1}(t\,|\, X,A) - S(t\,|\, X,A)\right)}{\widehat{S}_{n-1}(\tau\,|\, X,A)} \right| \wedge\tau \right]. \tag{$*2$}
\end{align}
If we assume that $\inf_{X,A} S(\tau\,|\, X,A) \geq \delta \text{ for some } \delta > 0,$
then,
\begin{align*}
(*2)&\leq \frac{16\tau^3}{\delta} \left[ \sup_{t\leq\tau}\left|\widehat{S}_{n-1}(t\,|\, X,A) - S(t\,|\, X,A)\right|  +\text{Pr}\left( \widehat{S}_{n-1}(t\,|\, X,A) \leq \frac{\delta}{2}\right) \right]. \tag{$*3$}
\end{align*}
But 
\begin{align*}
    \text{Pr}\left( \widehat{S}_{n-1}(t\,|\, X,A) \leq \frac{\delta}{2} \right)
    &\leq \text{Pr}\left( \widehat{S}_{n-1}(t\,|\, X,A) - S(t\,|\, X,A) \leq - \frac{\delta}{2} \right) \\
    &\leq \frac{E \left|\sup_{t\leq\tau}\left(\widehat{S}_{n-1}(t\,|\, X,A) - S(t\,|\, X,A)\right) \right| }{\delta/2},
\end{align*}
thus
\begin{align*}
    (*3) \leq \frac{8\tau^3}{\delta} \left( 1+\frac{2}{\delta} \right)E \left|\sup_{t\leq\tau}\left(\widehat{S}_{n-1}(t\,|\, X,A) - S(t\,|\, X,A)\right) \right|.
\end{align*}
Note that $r_n \rightarrow 0$ iff $r_{n-1} \rightarrow 0$, so the rates should be similar.
Turning our attention to $\fracn\sumi (1-\Delta_i) W_i(\tY_i-T_i)$. Let $F_i=\sigma(X_i,A_i,Y_i,\Delta_i)$ be the sigma field for a single observation and  $\overline{F}_n=\sigma(\bigcup\limits_{1\leq i \leq n} F_i)$. For any $i$, note that
\begin{align*}
\mathbb{E}[(1-\Delta_i) W_i(\tY_i-T_i)] &= \mathbb{E}\left[\mathbb{E}[(1-\Delta_i) W_i(\tY_i-T_i)\,|\, \overline{F}_n]\right] \\
&= \mathbb{E}\left[(1-\Delta_i) W_i\left(\mathbb{E}[T_i\,|\, X_i,A_i,Y_i]-\mathbb{E}[T_i\,|\, X_i,A_i,Y_i]\right)\right] = 0.
\end{align*}
For $1\leq i\neq j \leq n$, also note that
\begin{align*}
&\mathbb{E}[(1-\Delta_i) W_i(\tY_i-T_i)(1-\Delta_j) W_j(\tY_j-T_j)] \\
&=\mathbb{E}\left[(1-\Delta_i)(1-\Delta_j)W_i W_j\mathbb{E}[(\tY_i-T_i)(\tY_j-T_j)\,|\, \sigma(F_i\cup F_j)]\right],
\end{align*}
but,
\begin{align*}
&\mathbb{E}[(\tY_i-T_i)(\tY_j-T_j)\,|\, \sigma(F_i\cup F_j)] \\
&=\mathbb{E}\left[ \mathbb{E}[(\tY_i-T_i)\,|\, X_i,A_i,Y_i,\Delta_i]\mathbb{E}[(\tY_j-T_j)\,|\, X_j,A_j,Y_j,\Delta_j] \right]=0.
\end{align*}
This implies that $\fracn\sumi (1-\Delta_i) W_i(\tY_i-T_i)$ is approximately zero with variance 
\begin{align*}
    \fracn\left[\fracn \sumi \mathbb{E}\left[ (1-\Delta_i)W_i^2(\tY-T)^2 \right]\right] = \fracn \mathbb{E}\left[ (1-\Delta_i)W_i^2(\tY-T)^2 \right] \rightarrow 0,
\end{align*}
by exchangeability and by the fact that $\limsup_{n\rightarrow\infty} \mathbb{E}[ (1-\Delta_i)W_i^2(\tY-T)^2]<\infty$. Combining these results, we obtain 
\begin{align*}
 \frac{1}{n}\sum_{i=1}^n (1-\Delta_i) W_i(\widehat{Y}_i-T_i) &= O_p(r_n). \tag*{\qedhere}
\end{align*}
\end{proof}

\newpage

\bibliographystyle{agsm}

\bibliography{main}

\end{document}



\if1\blind
{
  \title{\bf Supplemental Material for ``Balanced Policy Evaluation and Learning for Right Censored Data''}
  \author{Owen E. Leete\hspace{.2cm}\\
    Department of Biostatistics, \\ University of North Carolina at Chapel Hill\\
    and \\
    Nathan  Kallus \\
    School of Operations Research and Information Engineering,\\ Cornell University\\
    and \\
    Michael G. Hudgens \\
    Department of Biostatistics, \\ University of North Carolina at Chapel Hill\\
    and \\
    Sonia Napravnik \\
    Division of Infectious Diseases, Department of Medicine, \\ University of North Carolina at Chapel Hill \\
    and \\
    Michael R. Kosorok \\
    Department of Biostatistics, \\ University of North Carolina at Chapel Hill}
  \maketitle
  \newpage
} \fi

\def\spacingset#1{\renewcommand{\baselinestretch}%
{#1}\small\normalsize} \spacingset{1}

\if0\blind
{
  \bigskip
  \bigskip
  \bigskip
  \begin{center}
    {\LARGE\bf Supplemental Material for ``Balanced Policy Evaluation and Learning for Right Censored Data''}
\end{center}
  \medskip
} \fi

\subsection*{Additional Simulation Study Results}

Figure \ref{fig:BP_Orig2} shows the results for simulation setting 1 for the simple weighted and doubly robust estimators. For this setting, the mean was a complicated function of the covariates. The nonparametric estimate of the mean was not accurate enough for the doubly robust methods over the corresponding simple weighted versions.

\begin{figure}[htbp]
\begin{center}
\includegraphics[width=4.5in]{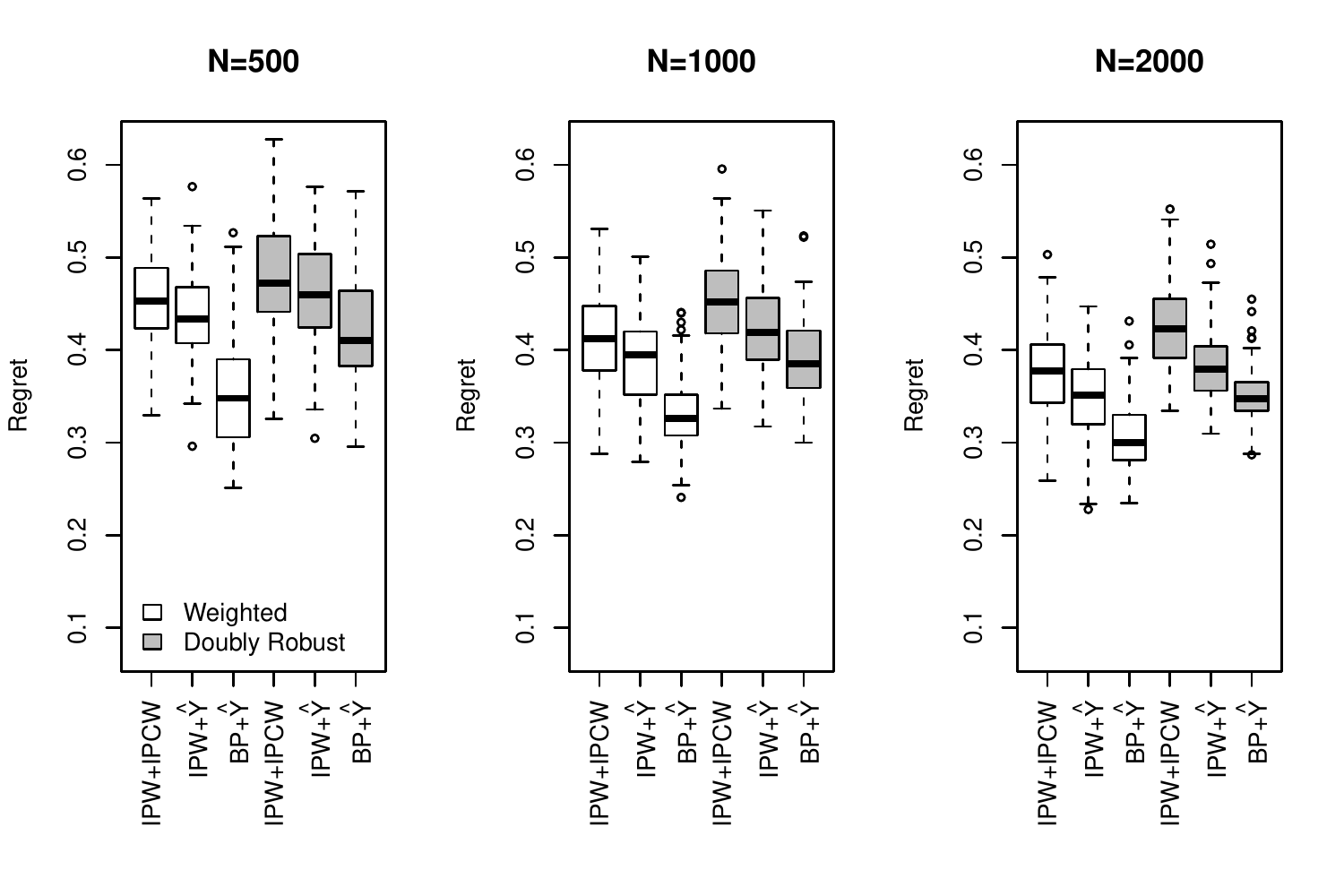}
\end{center}
\caption{Boxplots showing estimated population regret for setting 1. Smaller is better. IPW+IPCW: IPW combined with IPCW;  IPCW+$\hat{Y}$: IPW with imputed outcome vector; BP+$\hat{Y}$: Balanced policy with imputed outcome vector. Doubly robust versions used the correctly specified AFT model for $\hat\mu$. \label{fig:BP_Orig2}}
\end{figure}

Figure \ref{fig:AFT2} shows the results for simulation setting 2 with and without variable selection. For the smaller sample sizes, the variable selection method used was unable to reliably identify the meaningful variables. For the largest sample size, variable selection was able to improve the performance of the policy learners, which is most clearly seen in the IPW+$\hat{Y}$ and IPW+IPCW estimators.

\begin{figure}[htbp]
\begin{center}
\includegraphics[width=4.5in]{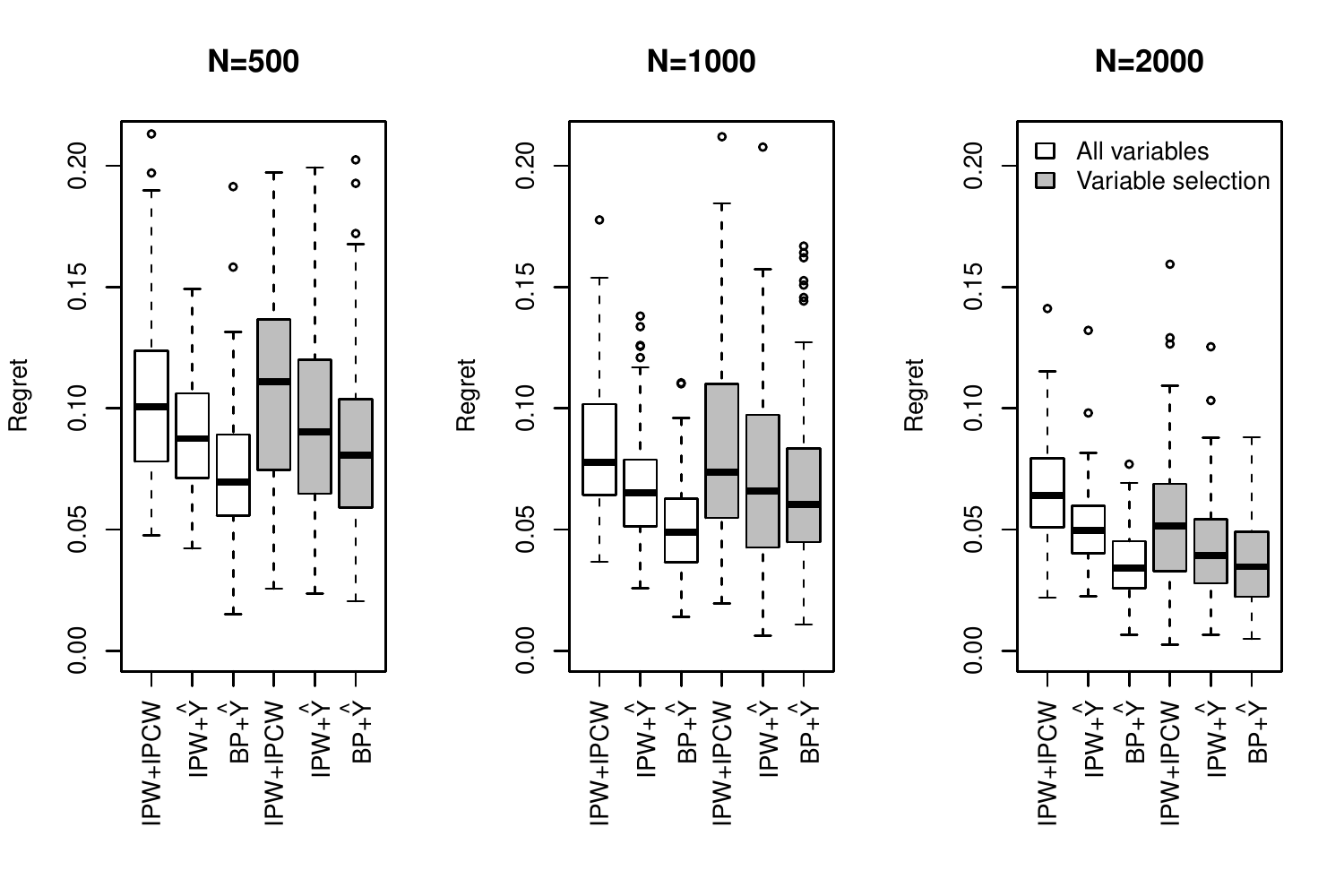}
\end{center}
\caption{Boxplots showing estimated population regret for setting 2. Smaller is better. IPW+IPCW: IPW combined with IPCW;  IPCW+$\hat{Y}$: IPW with imputed outcome vector; BP+$\hat{Y}$: Balanced policy with imputed outcome vector. Doubly robust versions used the correctly specified AFT model for $\hat\mu$. \label{fig:AFT2}}
\end{figure}

\subsection*{Additional Application Results}

The differences between the different INSTIs during the first 2.5 years of followup were quite small. A larger difference between RAL and EVG does not start to appear in the Kaplan-Meier plots until about 3 years of follow-up, indicating that there may be some effects that do not manifest until later in the treatment course. 

Based on clinical experience, we believe that treatment na\"ive patients should respond differently than treatment experienced patients and younger patients should respond differently than older patients. 

Figure \ref{fig:survivial_curves_appendix} shows survival curves broken down by treatment and treatment experience, and by treatment and age. For both treatment na\"ive and treatment experienced patients, Dolutegravir is associated with longer survival times. Similarly, for patients aged 30 years and younger as well as patients over 30, Dolutegravir appears to be superior to the other treatment options.

\begin{figure}[htbp]
    \centering
    \includegraphics[width=.4\textwidth]{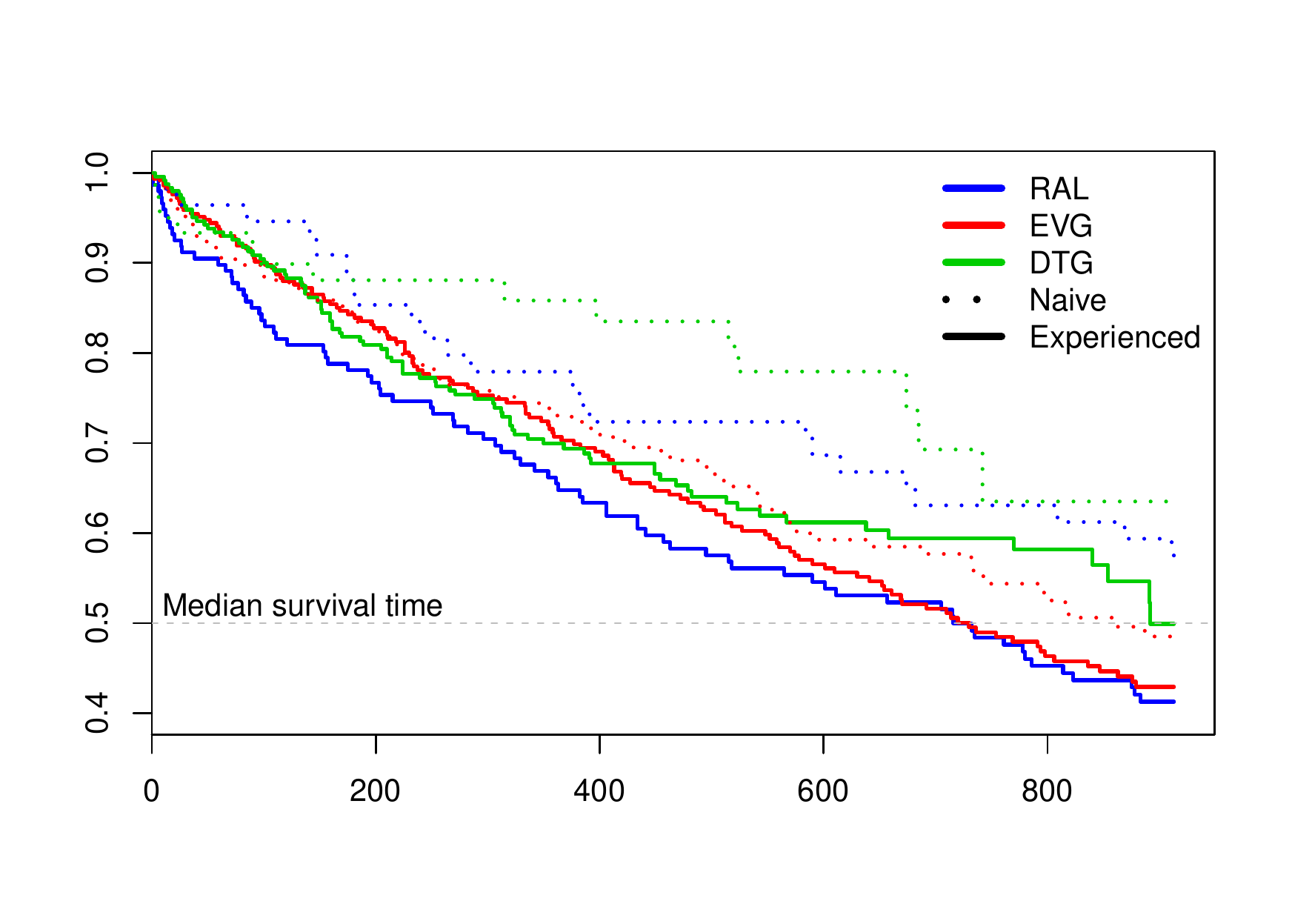}
    \includegraphics[width=.4\textwidth]{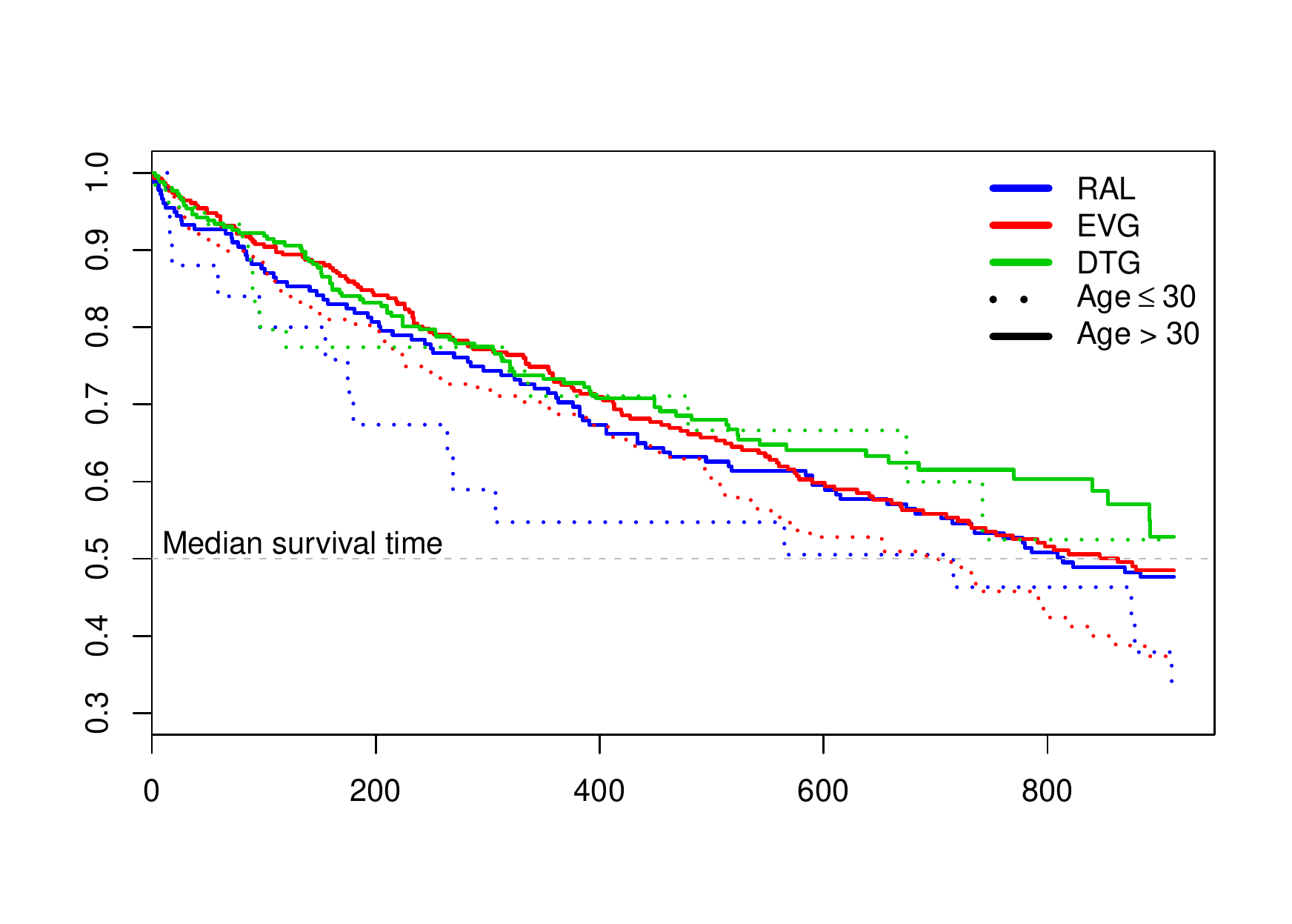}
    \caption{The left plot shows the survival curves for each treatment broken down by treatment na\"ive vs. experienced. The  right plot shows the survival curves for each treatment broken down by age$\leq$30 vs age$>30$. In both cases Dolutegravir was associated with longer survival times in each subpopulation.
    }
    \label{fig:survivial_curves_appendix}
\end{figure} 

Figure \ref{fig:survivial_curves_modified_appendix} shows the average projected survival curves under several ITR estimation methods, as well as the observed survival curve for the modified data. The regression based method resulted in a projected survival rate which is less than the observed survival rate in the data. The other methods examined resulted in improved projected survival rates when compared to the observed rate, but the best projected survival rates were associated with balanced policy evaluation with imputation.

\begin{figure}[htbp]
    \centering
    \includegraphics[width=.6\textwidth]{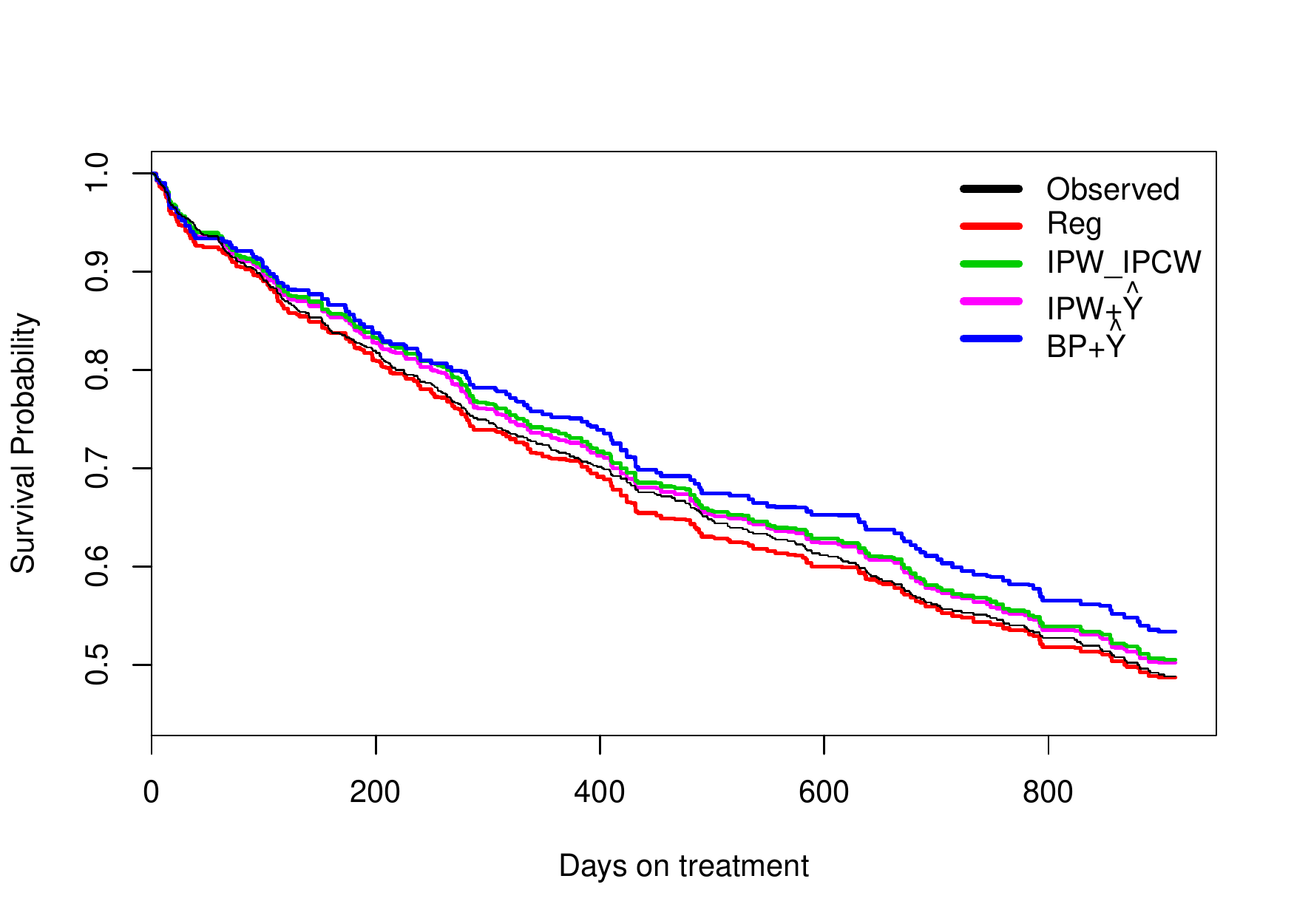}
    \caption{Average projected survival curves under several ITR estimation approaches compared to the observed survival curve. Reg: Regression-based approach; IPW+IPCW: IPW combined with IPCW;  IPCW+$\hat{Y}$: IPW with imputed outcome vector; BP+$\hat{Y}$: Balanced policy with imputed outcome vector. 
    }
    \label{fig:survivial_curves_modified_appendix}
\end{figure}